\newtheorem{theorem}{Theorem}[section]
\newtheorem{lemma}[theorem]{Lemma}
\newtheorem{proposition}[theorem]{Proposition}
\newtheorem{problem}[theorem]{Problem}
\newtheorem{observation}[theorem]{Observation}
\theoremstyle{definition}
\newtheorem{remark}[theorem]{Remark}
\newcommand\DELETE[1]{}
\begin{document}


\title{{\bf On homomorphism related parameters of oriented triangle-free planar graphs}}
\author{
{\sc Soura Sena Das}$\,^{a}$, {\sc Soumen Nandi}$\,^{b}$, {\sc Sagnik Sen}$\,^{c}$ \\
\mbox{}\\
{\small $(a)$ Indian Statistical Institute, Kolkata, India}\\
{\small $(b)$ Netaji Subhas Open University, India}\\
{\small $(c)$ Indian Institute of Technology Dharwad, India}
}

\date{}

\maketitle

\begin{abstract}
The first major contribution of this work is proving that the 
oriented relative clique number of oriented triangle-free planar graphs is $10$, which completely answers and closes an open problem posed by Sopena    (Discrete Mathematics 2016) in the most recent survey on oriented colorings. The second major contribution of the paper is to prove that if all oriented triangle-free planar graphs admit a homomorphism to a particular oriented graph $\overrightarrow{T}$, then its underlying graph $T$ must have minimum degree  at least $10$. 
This result implies that, for the family of 
oriented  triangle-free planar graphs,  
the lower bounds of the parameters 
oriented chromatic number, 
pushable chromatic number, 
$2$-dipath $L(p,1)$-labeling span, 
and oriented $L(p,1)$-labeling span 
are at least $11$, $6$, $p+8$, and $2p+8$, respectively, 
where $p \geq 1$. 
That is, we are able to obtain improved lower bounds of a number of other parameters restricted to the family of oriented triangle-free planar graphs using our second major contribution. 

\medskip

\noindent \textbf{Keywords:} graph homomorphisms, oriented colorings and chromatic numbers, relative cliques and clique numbers,  triangle-free planar graphs, oriented radio coloring parameters.
\end{abstract}

\section{Introduction}
An \textit{oriented graph} $\overrightarrow{G}$ is a directed graph without loops or parallel arcs in opposite directions.
Notice that, the definition allows parallel arcs in the same direction between two vertices, however, for our purpose, the extra parallel arcs in the same direction are redundant. 
Therefore, we can assume that all oriented graphs mentioned in this article have a simple graph as it's underlying graph. 
Moreover, given an oriented graph $\overrightarrow{G}$, we will denote its set of vertices and arcs by $V(\overrightarrow{G})$ and $A(\overrightarrow{G})$, respectively. 
We denote its underlying graph as $G$, that is, simply by dropping the overhead arrow. 

In 1992, Courcelle~\cite{courcelle-monadic} introduced the concept of oriented coloring and chromatic number in one of his papers from the series which established the illustrious Courcelle's theorem~\cite{COURCELLE199012}. 
A \textit{homomorphism }of an oriented graph $\overrightarrow{G}$ to another oriented graph $\overrightarrow{H}$ is a vertex mapping 
$f: V(\overrightarrow{G}) \to V(\overrightarrow{H})$
such that for any arc 
$xy \in A(\overrightarrow{G})$, there is an arc from $f(x)$ to $f(y)$ in $\overrightarrow{H}$. 
In particular, if $\overrightarrow{G}$ admits a homomorphism to 
$\overrightarrow{H}$ and $\overrightarrow{H}$ has $k$ vertices, then we say that 
$\overrightarrow{G}$ admits an oriented \textit{$k$-coloring}.
The \textit{oriented chromatic number} of $\overrightarrow{G}$, denoted by $\chi_o(\overrightarrow{G})$, is the minimum $k$  for which $G$ admits an oriented $k$-coloring.

\begin{remark}\label{rem 1}
It is possible to define the ordinary chromatic number of simple graphs in a similar fashion as pointed out by Hell and Ne\v{s}et\v{r}il~\cite{hell}. From this point of view, it is easier to observe how the oriented chromatic number is a natural analogue of the ordinary chromatic number.  For the notion of clique number, its oriented analogue ramifies into two different parameters, namely, oriented relative and absolute clique numbers. 
\end{remark}

An \textit{oriented relative clique} 
$R \subseteq V(\overrightarrow{G})$ of $\overrightarrow{G}$ is a vertex subset 
 for which any two vertices $x,y \in R$
 must satisfy $f(x) \neq f(y)$ under any homomorphism of $\overrightarrow{G}$ to any $\overrightarrow{H}$. 
  The \textit{oriented relative clique number} of 
$\overrightarrow{G}$, denoted by $\omega_{ro}(\overrightarrow{G})$, 
is the maximum $k$ such that $\overrightarrow{G}$ contains an oriented relative clique $R$ having $|R|=k$. 
An \textit{oriented absolute clique} 
$A \subseteq V(\overrightarrow{G})$ of $\overrightarrow{G}$ is a vertex subset 
 for which any two vertices $x,y \in A$
 must satisfy $f(x) \neq f(y)$ under any homomorphism of $\overrightarrow{G}[A]$ to any $\overrightarrow{H}$, where $\overrightarrow{G}[A]$ is the subgraph of $\overrightarrow{G}$ induced by the vertex subset $A$. 
  The \textit{oriented absolute clique number} of 
$\overrightarrow{G}$, denoted by $\omega_{ao}(\overrightarrow{G})$, 
is the maximum $k$ such that $\overrightarrow{G}$ contains an oriented absolute clique $A$ having $|A|=k$.

As both the notions use the condition of two vertices having distinct images under any homomorphism, the following observation due to Klostermeyer and MacGillivray~\cite{36} becomes very useful in their study.

\begin{proposition}[Klostermeyer and MacGillivray 2004~\cite{36}]\label{obs distinct image if and only if}
Let $\overrightarrow{G}$ be an oriented graph. 
Two vertices $x,y$ of $\overrightarrow{G}$ cannot have the same image under any homomorphism of $\overrightarrow{G}$ to any oriented graph $\overrightarrow{H}$ if and only if either they are adjacent or they are connected by a directed $2$-path (a directed path with $2$ arcs). 
\end{proposition}

\begin{remark}
The main difference  between an oriented relative clique $R$ and an oriented absolute clique $A$ is 
the following: 
the non-adjacent vertices of $R$ can be connected by a directed $2$-path whose internal vertex does not belong to $R$, while the non-adjacent vertices of $A$ must be connected by a directed $2$-path whose internal  
vertex belongs to $A$. 
\end{remark}

From the definitions of the three parameters, a natural relation between them becomes apparent. This relation makes a study of one of these parameters impact the study of the other two.

\begin{proposition}[Nandy, Sen, and Sopena 2016~\cite{NSS}]\label{prop relation between the three parameters}
For any oriented graph $\overrightarrow{G}$ we have
$$\omega_{ao}(\overrightarrow{G}) \leq \omega_{ro}(\overrightarrow{G}) \leq \chi_{o}(\overrightarrow{G}).$$
\end{proposition}

The definitions of the three parameters are extended to simple graphs and graph families for the ease of presentation of the  research results. 
Let $p \in \{\chi_o, \omega_{ro}, \omega_{ao}\}$ denote any of the three parameters. Then, for a simple graph $G$ we have 
$$p(G) = \max\{p(\overrightarrow{G}): \overrightarrow{G} \text{ is an orientation of } G\}$$  
and for a  family $\mathcal{F}$ of  graphs we have  $$p(\mathcal{F})=\max\{p(G) : G \in \mathcal{F}\}.$$
Given a family $\mathcal{F}$ of oriented graphs (resp., graphs), 
if $\forall\overrightarrow{G} \in \mathcal{F}$ 
(resp., all orientations $\overrightarrow{G}$ of $G$ for $G \in \mathcal{F}$) admits a homomorphism to $\overrightarrow{H}$, 
then $\overrightarrow{H}$ is 
a bound of $\mathcal{F}$. 
Additionally, if no 
proper subgraph of  $\overrightarrow{H}$ is a
bound of $\mathcal{F}$, then we say that $\overrightarrow{H}$ is a 
\textit{minimal bound} of $\mathcal{F}$.

\begin{remark}\label{remark oriented chromatic minimal family bounds}
    If $\chi_o(\mathcal{F}) = k$, then it is not necessary that a (minimal) bound of 
    $\mathcal{F}$ will exist on $k$ vertices. For example, for the family $\mathcal{T}_3$  of all tournaments on $3$ vertices, we have $\chi_o(\mathcal{T}_3) = 3$, but there exists no (minimal) bound of $\mathcal{T}_3$ on $3$ vertices. On the other hand, Sopena (see Proposition~2 in~\cite{SOPENA2002309}) showed that for the family $\mathcal{P}_3$ of planar graphs, there must exist a (minimal) bound on $\chi_o(\mathcal{P}_3)$ vertices. Similarly, one can show that 
    for the family $\mathcal{P}_g$ of planar graphs having girth at least $g$, there must exist a (minimal) bound on $\chi_o(\mathcal{P}_g)$ vertices.  Thus, proving the non-existence of a (minimal) bound
    for such families on certain numbers of vertices helps prove lower bounds for the oriented chromatic number of the said family. 
\end{remark}

\subsection{Context and motivation}
The initial problem considered in Courcelle's work~\cite{courcelle-monadic} was to find the oriented chromatic number of the family of planar graphs. 
 Courcelle~\cite{courcelle-monadic} provided an upper bound for oriented chromatic number   as $64 \cdot 3^{63}$ for oriented planar graphs with in-degree at most $3$, which was soon improved to $80$ for general oriented planar graphs by Raspaud and Sopena~\cite{planar80}.
To date, according to 
the best known~\cite{marshall18,planar80} lower and upper bounds of the oriented chromatic number of planar graphs, its exact value is known to lie somewhere between 
$18$ and $80$. The resolution of the problem, that is, finding the exact value of the oriented chromatic number of planar graphs, will provide us with the analogue of the Four-Color Theorem for oriented graphs, while on the other hand, the problem seems difficult to solve and has remained open since $1994$.

As related sub-problems, the oriented chromatic number of the family, $\mathcal{P}_g$, of planar graphs having girth at least $g$ has been given a particular focus~\cite{Borodin2005,Borodin2007,mad, marshall18, marshallgirth6,NRSnotreza,Ochemgirth4,OCHEM200882,OPgirth4,planar80}. 
Naturally, the oriented relative and absolute clique numbers for the same families are studied as well~\cite{mj15,NSS}. 
To give an overview of the state of the art, let us present 
the related results in Table~\ref{table results}.

Observe that, the problem of finding the exact values of 
$\omega_{ao}(\mathcal{P}_g)$ for all $g \geq 3$ is completely solved. 
It is worth mentioning that Klostermeyer and MacGillivray~\cite{36}  showed  
$15 \leq \omega_{ao}(\mathcal{P}_3) \leq 36$ and 
$\omega_{ao}(\mathcal{P}_4) \leq 14$ in $2004$. 
Furthermore, in the same work~\cite{36} the authors conjectured $\omega_{ao}(\mathcal{P}_3) = 15$ and 
left the determination of the exact value of $\omega_{ao}(\mathcal{P}_4)$ as an open problem. Later, in~\cite{NSS}, the conjecture was positively settled by proving 
$\omega_{ao}(\mathcal{P}_3) = 15$ and the open problem was closed by proving $\omega_{ao}(\mathcal{P}_4) = 6$.

In~\cite{mj15}, the oriented relative clique number for the family, $\mathcal{P}_g$, of planar graphs having girth at least $g$ was studied for all $g \geq 3$.  
In that paper~\cite{mj15}, the exact values of 
$\omega_{ro}(\mathcal{P}_g)$ is
determined for all $g \geq 5$,  
while the main results of the same were 
given by $15 \leq \omega_{ro}(\mathcal{P}_3) \leq 32$
and $10 \leq \omega_{ao}(\mathcal{P}_4) \leq 14$. 
Thus, finding the exact values of 
$\omega_{ro}(\mathcal{P}_3)$ and 
$\omega_{ro}(\mathcal{P}_4)$ are natural open problems. In fact, these two were later 
highlighted in the latest survey on oriented coloring by 
Sopena~\cite{sopena-updated-survey} as major open problems in the domain (see Problem~3
in Section 10: ``Open Problems''). 

\begin{problem}[Sopena 2016~\cite{sopena-updated-survey}]\label{problem planar}
Determine the largest possible value of the 
oriented relative clique number of  planar graphs.                 
\end{problem}

\begin{problem}[Sopena 2016~\cite{sopena-updated-survey}]\label{problem triangle-free}
Determine the largest possible value of the 
oriented relative clique number of triangle-free planar graphs.
\end{problem}

\begin{table}[t]
\centering 
\makegapedcells
\begin{tabular}{|c|l|l|l|}
\hline
\hline  
    $\mathcal{P}_g$, for $g$  & \multicolumn{1}{|c|}{$\chi_o(\mathcal{P}_g)$} &
     \multicolumn{1}{|c|}{$\omega_{ro}(\mathcal{P}_g)$} &  \multicolumn{1}{|c|}{$\omega_{ao}(\mathcal{P}_g)$}  \\
    \hline
    \hline
    $=3$ & $18 \leq \chi_o(\mathcal{P}_3) \leq 80$~\cite{marshall18,planar80} & {$15 \leq \omega_{ro}(\mathcal{P}_3) \leq 32$}~\cite{mj15}   & $\omega_{ao}(\mathcal{P}_3) = 15$~\cite{NSS} \\
    \hline
    $=4$ & $\textcolor{red}{\textbf{11} \leq \mathbf{\chi_o(\mathcal{P}_4)}} \leq 40$~\cite{Ochemgirth4,OPgirth4} & $\textcolor{red}{\mathbf{\omega_{ro}(\mathcal{P}_4)} = \textbf{10}}$ &   $\omega_{ao}(\mathcal{P}_4) = 6$~\cite{NSS}   \\
    & \textbf{[Theorem~\ref{thm orientedChromatic11}]} & \textbf{[Theorem~\ref{th main}]} & \\
    \hline
    $=5$ & $7 \leq \chi_o(\mathcal{P}_5) \leq 16$~\cite{marshallgirth6,Pinlougirth5} & $\omega_{ro}(\mathcal{P}_5) = 6$~\cite{mj15}   & $\omega_{ao}(\mathcal{P}_5) = 5$~\cite{NSS}  \\
    \hline
    $=6$ & $ 7 \leq \chi_o(\mathcal{P}_6) \leq 11 $~\cite{marshallgirth6,mad} & $\omega_{ro}(\mathcal{P}_6) = 4$~\cite{mj15}  &  \multirow{7}{*}{$\omega_{ao}(\mathcal{P}_{g \geq 6}) = 3$~\cite{NSS}} \\
    \cline{1-3}
    $=7 $ & $ 6 \leq \chi_o(\mathcal{P}_7) \leq 7$~\cite{NRSnotreza,Borodin2005} & \multirow{6}{*}{$\omega_{ro}(\mathcal{P}_{g \geq 7}) = 3$~\cite{mj15}} &  \\
    \cline{1-2}
    $=8,9,10 $ & $ 5 \leq \chi_o(\mathcal{P}_{g}) \leq 7 $~\cite{NRSnotreza,Borodin2005} &  &  \\
    \cline{1-2}
    $=11 $ & $ 5 \leq \chi_o(\mathcal{P}_{11}) \leq 6$~\cite{NRSnotreza,OCHEM200882} &  &  \\
    \cline{1-2}
    $ \geq 12 $ & $ \chi_o(\mathcal{P}_{g \geq 12}) = 5 $~\cite{NRSnotreza,Borodin2007} &  &  \\
    \hline
\end{tabular}

\caption{This is the list of all known lower and upper bounds for $\chi_o(\mathcal{P}_g) $,
    $\omega_{ro}(\mathcal{P}_g)$ and $\omega_{ao}(\mathcal{P}_g)$ where $\mathcal{P}_g$ denotes the family of planar graphs having girth at least $g$.}
\label{table results}
\end{table}

\medskip

\noindent \textit{The significance of oriented cliques:}
The notions of oriented relative and absolute clique number are together referred to as oriented cliques. Even though there are only a handful of articles dedicated to the structural~\cite{mj15,36,NSS} 
and  algorithmic study~\cite{ COELHO2023, dybizbanski_coloring_2022} of oriented cliques, the notion of oriented cliques is natural and important, and they appear (sometimes implicitly) in several research papers on oriented coloring and related topics, namely, papers related to lower bounds for oriented chromatic number~\cite{mj15,marshall17,marshall18,marshallgirth6,Ochemgirth4,Ochem_negativeresults,orientedchi}, oriented achromatic number~\cite{SOPENA2014102,PavanSopenaAchromatic}, deeply critical oriented graphs~\cite{borodin2001deeply,pavan} and the oriented analogue of the degree diameter problem~\cite{erdos1966problem,furedi1998minimal,katona1967problem,kostochka1999minimum}.
Overall, it will not be unfair to claim that the structural understanding of the oriented cliques is a fundamental problem and is worth studying.

\medskip

The second column of  Table~\ref{table results} shows that the exact values of $\chi_{o}(\mathcal{P}_g)$ remains unknown for all $g \in \{3, 4, \ldots, 11\}$. By Remark~\ref{remark oriented chromatic minimal family bounds}, finding the exact values of $\chi_{o}(\mathcal{P}_g)$ boils down to finding (minimal) bounds of $\mathcal{P}_g$ on $\chi_o(\mathcal{P}_g)$ vertices.
In fact, most of the lower and upper bounds
in the second column of Table~\ref{table results} are obtained theoretically by inspecting (explicitly or implicitly)  the properties 
of the minimal bounds of 
$\mathcal{P}_g$. One of the major exceptions is the case of 
 $\chi_{o}(\mathcal{P}_4)$, where the lower bound of $11$ 
 is achieved by Ochem~\cite{Ochemgirth4} through a computer check, and hence the result lacks a theoretical proof, implying a natural open problem. 

\begin{problem}\label{prob theoretical}
    Provide a theoretical proof of $\chi_{o}(\mathcal{P}_4) \geq 11$. 
\end{problem}

\subsection{Our contributions and organisation}
In this article, we present our section-wise contributions as organized below.

\begin{itemize}
    \item In \textbf{Section~\ref{sec prelim}}, we present the definitions, notation, and terminology specific to oriented graphs. 

    \item In \textbf{Section~\ref{sec proof}}, we prove that the oriented relative clique number for the family of triangle-free planar graphs is $10$, and hence solve Problem~\ref{problem triangle-free} due to Sopena~\cite{sopena-updated-survey}. The proof is novel and lengthy. 

    \item In \textbf{Section~\ref{sec min_bounds}}, we use oriented cliques to show that any minimal bound of the family of triangle-free planar graphs have minimum degree at least $10$. 
    As a consequence of this result, we obtain the first theoretical proof of the lower bound $\chi_o(\mathcal{P}_4) \geq 11$, and thus solve Problem~\ref{prob theoretical}. 
    Moreover, we apply the main result of this section to establish improved lower bounds in the context of three other parameters related to oriented graphs, namely, pushable chromatic number, $2$-dipath $L(p,1)$-span, and oriented $L(p,1)$-span (definitions provided in Section~\ref{sec min_bounds}).

    \item In \textbf{Section~\ref{sec conclusion}}, we conclude this article and present a few possible future directions of study.
\end{itemize}

\medskip

\noindent \textbf{Note:} A preliminary version of this article was presented as a paper in CALDAM 2020~\cite{DasN020}. 
This version contains a significant improvement of the proof of Theorem~\ref{th main} and its presentation along with newer results in Section~\ref{sec min_bounds}.

\section{Preliminaries}~\label{sec prelim}
We  follow West~\cite{D.B.West} for standard graph theory notation.
In general, if some particular notation is meaningful for an undirected graph, but not for an oriented graph, then the same, if applied in the context of an oriented graph
$\overrightarrow{G}$, will
be actually meant for its underlying graph $G$. 
Further, some useful, but non-standard notations are presented below.

Let $\overrightarrow{G}$ be an oriented graph, and let $xy$ be an arc of it. 
Then we say that $x$ is an \textit{in-neighbor} of $y$ and $y$ is an \textit{out-neighbor} of $x$. 
 The set of all in-neighbors and out-neighbors of $x$ is denoted by $N^-(x)$
 and $N^+(x)$, respectively. 
 Moreover,  \textit{in-degree} and \textit{out-degree} of a vertex $x$ 
 is given by  $d^-(x) = |N^-(x)|$ and 
 $d^+(x) = |N^+(x)|$, respectively. 
 Two vertices $x,y$ \textit{agree} on a third vertex $z$ if 
 $z \in N^{\alpha}(u) \cap N^{\alpha}(v)$ for some $\alpha \in \{+,-\}$.  
 Also $x,y$ \textit{disagree} on $z$ if  
 $z \in N^{\alpha}(x) \cap N^{\beta}(y)$ for some 
 $\{\alpha, \beta\} = \{+,-\}$\footnote{We use this notation frequently to denote $\alpha, \beta \in \{+,-\}$ and $\alpha \neq \beta$. Our notation is a set theoretic equation whose solutions are the values that $\alpha, \beta$ may take.}.
 We say a vertex $x$ \textit{sees} a vertex $y$ if they are adjacent, or they are connected by a directed $2$-path. If $x,y$ are connected by a directed $2$-path with the internal vertex of the directed $2$-path being $z$, we say that $x$ sees $y$ \textit{via} $z$.

 A cycle $C = x_1x_2 \cdots x_k x_1$ is a \textit{separating cycle} of a connected graph $G$ if $G - C$, that is, the
graph obtained by deleting the vertices of the cycle $C$ from $G$, is disconnected. Moreover,
if two vertices $y$ and $z$ belong to different components of $G - C$, then we say that $C$
\textit{separates} $y$ and $z$.

\section{Oriented relative clique number of $\mathcal{P}_4$}\label{sec proof}
In this section we are going to respond to Problem~\ref{problem triangle-free} due to 
Sopena~\cite{sopena-updated-survey} by giving a conclusive answer via proving the following theorem. 

\begin{theorem}\label{th main}
For the family $\mathcal{P}_4$ of triangle-free planar graphs, we have
$\omega_{ro}(\mathcal{P}_4)=10$.
\end{theorem}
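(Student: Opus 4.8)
The plan is to establish the two bounds $\omega_{ro}(\mathcal{P}_4) \geq 10$ and $\omega_{ro}(\mathcal{P}_4) \leq 10$ separately, the latter being by far the more substantial part. For the lower bound, we already know from \cite{mj15} that there exists a triangle-free planar graph admitting an orientation with an oriented relative clique of size $10$; one would recall this construction explicitly (a specific triangle-free planar oriented graph on a small number of vertices containing such a set $R$) and verify that $R$ is indeed a relative clique, i.e.\ that under every homomorphism to any oriented graph the images of the vertices of $R$ stay pairwise distinct. This amounts to checking that every pair of vertices in $R$ is either adjacent or connected by a directed path of length $2$ with a prescribed orientation pattern that no homomorphism can collapse — a finite, routine verification.

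For the upper bound, which is the crux, the plan is a discharging argument. Suppose for contradiction that some triangle-free planar graph $G$ has an orientation $\overrightarrow{G}$ with an oriented relative clique $R$ of size at least $11$; take such a counterexample minimizing $|V(\overrightarrow{G})|$ (or some other natural measure such as $|V|+|A|$). First I would record the structural constraints forced by $R$ being a relative clique: for any two vertices $u, v \in R$ that are non-adjacent, they must be joined by a directed $2$-path, and moreover the set of such connecting configurations must be "rigid" enough that no identification is possible — the standard way to encode this is that between any two vertices of $R$ there is an arc, or there exist the right special $2$-paths (one forcing $u,v$ apart via a common in-neighbor/out-neighbor pattern). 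Then I would prove a list of reducible configurations: small-degree vertices outside $R$, vertices of degree $2$ in $R$ with controlled neighborhoods, certain short paths and small separating structures cannot appear, because in each case one could delete or contract part of the graph, apply minimality to obtain a smaller relative clique of the same size, and then re-insert the deleted part while extending the witnessing homomorphism structure — contradicting minimality.

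Next I would set up the discharging phase. Assign to each vertex $v$ the charge $d(v) - 4$ and to each face $f$ the charge $\ell(f) - 4$, so that by Euler's formula the total charge is $-8 < 0$ (using triangle-freeness, every face has length at least $4$, so face charges are nonnegative and all the negative charge sits on vertices of degree $2$ and $3$). I would then design discharging rules — typically faces and high-degree vertices sending charge to nearby low-degree vertices, with special rules protecting or exploiting the vertices of $R$ — so that after redistribution every vertex and face has nonnegative charge, contradicting the negative total. The rules must be calibrated against exactly the configurations proved reducible in the previous step: the reducibility lemmas guarantee that the "bad" local pictures in which some vertex ends with negative charge simply do not occur.

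The hard part will be the reducibility lemmas and, intertwined with them, getting the discharging rules to close. The difficulty is specific to the relative clique setting: unlike ordinary coloring, where reducibility means "extend a proper coloring," here one must argue that a vertex set remains a relative clique after a local modification, which means reasoning about all homomorphisms to all targets simultaneously — so the reductions must preserve the delicate $2$-path "rigidity" that forces vertices apart. Managing the interaction between vertices inside $R$ and outside $R$ (a degree-$3$ vertex of $R$ behaves very differently from a degree-$3$ vertex not in $R$), and ensuring the case analysis of local configurations around $R$ is exhaustive, is where the bulk of the technical work and the real danger of gaps lies; I would expect to need a fairly long catalog of forbidden configurations and a correspondingly delicate choice of discharging rules, possibly with a second round of discharging or an amortized argument over the vertices of $R$.
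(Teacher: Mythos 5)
Your lower-bound paragraph is fine (the paper also simply inherits the construction giving $\omega_{ro}(\mathcal{P}_4)\geq 10$ from earlier work), but for the upper bound — which is the entire content of the theorem — your proposal is a template rather than a proof, and the gap is concrete: you set up the standard charges $d(v)-4$ on vertices and $\ell(f)-4$ on faces, yet you exhibit not a single reducible configuration and not a single discharging rule, and you never indicate where the specific threshold $|R|\geq 11$ (as opposed to $12$ or $15$) would enter the argument. Since the whole theorem lives in exactly that calibration, deferring it with ``this is where the bulk of the technical work lies'' leaves nothing that can be checked. There is also a conceptual flaw in how you describe reducibility: for relative cliques there is no ``re-insert the deleted part and extend the witnessing homomorphism'' step, because reducibility here is not about extending a coloring. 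To contradict minimality you must show that after the local deletion or contraction the set $R$ is \emph{still} a relative clique of size at least $11$ in the smaller (still triangle-free planar) graph, i.e.\ that every pair of good vertices remains adjacent or joined by a directed $2$-path; deletions generally destroy these witnesses, and your sketch never addresses how the reductions preserve them. The correct observation, which you state but do not use, is precisely this adjacent-or-directed-$2$-path characterization, and it is the engine of the actual proof, not a side constraint.

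For comparison, the paper avoids discharging altogether. From a minimal counterexample with $|R|\geq 11$ it shows that the non-clique ``helper'' vertices form an independent set of degree exactly $2$, then proves a chain of purely local lemmas (using planarity, triangle-freeness, and counting arguments of the form $|R\setminus\{\ldots\}|\geq 5$, which is where $|R|\geq 11$ enters) showing that no vertex of $H$ can have three neighbors in $R$. Suppressing the degree-$2$ helpers yields a planar graph on $R$, hence some good vertex $x$ of degree at most $5$; the at least $8$ good vertices outside $N[x]$ must each see $x$ through one of its at most $5$ neighbors, so by pigeonhole some neighbor of $x$ acquires three good neighbors — a contradiction. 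This is where the bound $10$ versus $11$ is actually decided, and none of it is visible in your discharging framework; without the catalog of configurations and rules, and without an account of how reductions preserve the relative-clique witnesses, the proposal does not yield the upper bound.
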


The rest of the section is dedicated to the proof of the main result. As the proof is novel and lengthy, and thus we have separated it into relevant subsections for the convenience of the readers. Moreover, there are several figures to improve the readability of the proof. 

\subsection{Basic set-up of the proof}
Let $\overrightarrow{H}$ be a minimal
counter-example to Theorem~\ref{th main}
with respect to  $|V(\overrightarrow{H})|+|A(\overrightarrow{H})|$.  
That means, $\overrightarrow{H}$ is an oriented triangle-free planar  graph
and its oriented relative clique number is strictly greater than $10$. Moreover, any other oriented  triangle-free planar graph $\overrightarrow{H'}$ having oriented relative clique number strictly greater than $10$ must have 
$|V(\overrightarrow{H}')|+|A(\overrightarrow{H}')| \geq |V(\overrightarrow{H})|+|A(\overrightarrow{H})|$.

Let us assume that $R$ is an oriented relative clique of $\overrightarrow{H}$ 
having $\omega_{ro}(\overrightarrow{H})$ many vertices. 
In particular, $|R| \geq 11$. 
For convenience, we will call the vertices of $R$ as \textit{good} vertices. 
On the other hand, the vertices that are not part of $R$, that is, 
the vertices from the set $S= V(\overrightarrow{H}) \setminus R$ are
 \textit{helper} vertices. 
For the rest of this proof, we are going to assume a particular 
planar embedding of $\overrightarrow{H}$ and whenever we speak about a subgraph of $\overrightarrow{H}$, we will assume it to have the induced embedding. 
Now, we are going to make two observations about the helper vertices which directly follow from the minimality of $\overrightarrow{H}$. 

\begin{observation}
    The set $S$ of helper vertices is independent. 
\end{observation}

\begin{proof}
   If two helper vertices $h_1$ and $h_2$ are adjacent with the arc $h_1h_2$, then consider the oriented graph $\overrightarrow{H} - h_1h_2$ obtained by deleting the arc $h_1h_2$ from $\overrightarrow{H}$. Notice that, the vertices of $R$ still see each other in $\overrightarrow{H} - h_1h_2$. Thus, $\overrightarrow{H} - h_1h_2$ is a counter-example to Theorem~\ref{th main} which 
   contradicts the minimality of $\overrightarrow{H}$. 
\end{proof}

\begin{observation}
    If $h$ is a helper vertex, then it has at least one in-neighbor and at least one out-neighbor. In particular, its degree is at least two, that is, $d(h) \geq 2$. 
\end{observation}

\begin{proof}
   Let $h$ be a helper vertex. If all its neighbors are in-neighbors (resp., out-neighbors), then none of the good vertices are seeing each other via $h$. That implies, $R$ is a relative clique in $\overrightarrow{H} - h$. 
   Thus, $\overrightarrow{H} - h$ is a counter-example to Theorem~\ref{th main} which contradicts the minimality of $\overrightarrow{H}$. 
\end{proof}

 This enables us to draw some information about the neighborhood of the helper vertices. Next we will recall a result due to Klostermeyer and MacGillivray~\cite{36} 
 which provides us some information about the number of good vertices in the neighborhood of a vertex. Notice that, this result was proved by 
 Klostermeyer and MacGillivray~\cite{36} (implicitly proved as a part of the proof of Theorem 11 in~\cite{36}, for a complete proof see~\cite{Chakraborty0NR023}). 

 \begin{lemma}[Klostermeyer and MacGillivray 2004~\cite{36}]\label{lem 5}
     Let $v$ be any vertex of $\overrightarrow{H}$. At most four good vertices may agree on $v$.  
 \end{lemma}
 
 Continuing in the spirit of the above lemma, 
 we are going to show that it is not possible for $\overrightarrow{H}$ to contain certain configurations. 
 These forbidden configurations hold the key for the proof.

\subsection{Special terms used in the proof} 
Given a planar graph and its embedding, any cycle $C$ is drawn as a simple close curve. By Jordan Curve Theorem, the curve induced by $C$ divides the plane into two connected (topologically) regions. 
Usually, in some of the proofs, 
 we will work with a subgraph of $\overrightarrow{H}$. Moreover, the embedding of this subgraph will be is unique up to continuous deformation, and all the 
   assumptions related to its orientations are taken without loss of generality. Therefore, it is possible for us to 
   draw a generic picture of this subgraph and 
   name the connected regions induced by it in a figure with references. Moreover, if $F$ denotes such a region, then by ``$F$ is empty'' we mean that it does not contain a good vertex, and  $g \in F$ will mean that $g$ is a good vertex that belongs to $F$. Also, $|F|$ will denote the number of good vertices belonging in it.

 \subsection{Forbidding four good vertices in a neighborhood}\label{subsec forbid4}
 In this subsection, we are going to show that
 any vertex $x$ of $\overrightarrow{H}$ 
 can have at most three good vertices in its neighborhood. To prove so, we will show that if any vertex $v$ of $\overrightarrow{H}$ contains four good vertices in its neighborhood, then we arrive to a contradiction, and thus forbid such a configuration in $\overrightarrow{H}$.  
 However, based on the embedding and the orientation of the configuration, 
 we have to consider a few different cases in the lemmas listed below.

 \begin{lemma}\label{lem 4,0}
 Four good vertices $v_1,v_2,v_3, v_4$ cannot agree with each other
  on a vertex $v$.
 \end{lemma}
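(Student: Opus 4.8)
The plan is to mimic and extend the argument of Lemma~\ref{lem 5}, pushing the case analysis down from five agreeing good vertices to four. Suppose for contradiction that $v_1, v_2, v_3, v_4$ all agree with each other on $v$; since $H$ is triangle-free, $\{v_1,\dots,v_4\}$ is independent, and we may assume they appear in this clockwise order around $v$ in the fixed planar embedding. Because $v_1$ and $v_3$ are good, they must see each other; the connecting $2$-path cannot use $v$ (that would make them disagree on $v$, not agree) nor can it be an edge $v_1v_3$ (triangle with $v$). So $v_1$ sees $v_3$ via some helper $h_1$. Likewise $v_2$ and $v_4$ must see each other via some helper; the key first step is to argue, using planarity, that this helper is again $h_1$: the cycle $v v_1 h_1 v_3 v$ separates the plane into two regions, $v_2$ lies in the region that also must contain (by the clockwise order) the arcs realizing $v_2$-sees-$v_4$, and routing a $2$-path from $v_2$ to $v_4$ through a different helper would force a crossing. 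Thus $v_1, v_2, v_3, v_4$ all have $h_1$ as a common neighbor.

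Next I would look at the edge $vv_2$ (or $vv_3$) and the helpers $h_2, h_3$ forced by the triangle-free structure, as suggested in Figure~\ref{fig lem3-2}. The point of triangle-freeness here is that $v$ and $h_1$ already have two common neighbors among $\{v_1,v_2,v_3,v_4\}$, say $v_1$ and $v_2$ on one side; but then the $4$-cycle $v v_1 h_1 v_2 v$ is chordless, and any good vertex or helper trapped inside or outside it is constrained. Concretely, once $v_1,v_2,v_3,v_4$ share both neighbors $v$ and $h_1$, consider whether $v_1$ sees $v_2$: not via $v$ (they agree on $v$), not via $h_1$ (they'd agree there too, but that does not yet contradict — so here I must use that seeing via both $v$ and $h_1$ with the same sign, combined with the relative-clique hypothesis applied to a homomorphism, is fine, so instead) — they must see each other via a \emph{new} helper $h_2$, and similarly $v_3, v_4$ see each other via a new helper $h_3$. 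Planarity then places $h_2$ in the region bounded by $v, v_1, h_1, v_2$ and $h_3$ in the region bounded by $v, v_3, h_1, v_4$, as in the figure.

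The final step is a counting/region argument in the spirit of Lemma~\ref{lem 5}. With $h_1, h_2, h_3$ placed, the remaining ``cross'' sightlines — $v_1$-sees-$v_4$ and $v_2$-sees-$v_3$ — must each be realized by $2$-paths through helpers, and every such helper must lie in the outer region bounded by $v_1, h_2, v_2, v, v_3, h_3, v_4, h_1$ (it cannot penetrate the inner $4$-cycles without a crossing or a triangle). One then shows that a single helper cannot serve both cross-pairs while respecting the agreement signs on $v$ and $h_1$, and that two distinct helpers routed in that outer region, together with the already-present arcs to $h_1$, force either a crossing or a triangle in $H$ — the desired contradiction.

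The main obstacle I anticipate is the second step: rigorously forcing that $v_2$ and $v_4$ see each other through the \emph{same} helper $h_1$ rather than an independent one. This is a genuine planarity argument about where $2$-paths can be routed given the fixed embedding and the clockwise order, and it is easy to state loosely but requires care — in particular one must handle the orientations (the signs $\alpha$ on which the vertices agree) to rule out the alternative that a helper sits on the ``wrong'' side. Once the common-helper structure $h_1$ is nailed down, the triangle-freeness kills the short-circuit options and the rest is a region-chasing argument of the same flavor as the proof of Lemma~\ref{lem 5}, only with one more forced helper to track.
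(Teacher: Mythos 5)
Your first two steps track the paper's own proof: forcing the two ``cross'' pairs to share a single helper $h_1$ by the separation argument, and then forcing new helpers $h_2$ (for $v_1,v_2$) and $h_3$ (for $v_3,v_4$). (Even here your justification wobbles: the reason $v_1$ cannot see $v_2$ via $h_1$ is that, up to relabelling, one may assume $v_1,v_2\in N^-(h_1)$ and $v_3,v_4\in N^+(h_1)$; your parenthetical about the ``relative-clique hypothesis applied to a homomorphism'' does not establish this.) The genuine gap is your final step. You claim that the remaining sightlines $v_1$--$v_4$ and $v_2$--$v_3$ ``must each be realized by $2$-paths through helpers'' routed in the outer region, and that this forces a crossing or a triangle. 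This is false: with the sign pattern $v_1,v_2\in N^-(h_1)$, $v_3,v_4\in N^+(h_1)$, the vertex $v_1$ already sees $v_4$ via $h_1$ and $v_2$ already sees $v_3$ via $h_1$, so no new helpers are needed for these pairs and no local contradiction arises. Indeed, the configuration $\{v,v_1,v_2,v_3,v_4,h_1,h_2,h_3\}$ you have assembled is exactly the graph of Figure~\ref{fig lem3-2}, which is planar, triangle-free, and realizable, and in which the four good vertices agree on $v$ and pairwise see each other. Hence no purely local crossing/triangle argument can complete the proof.

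The missing idea is that the contradiction must be global, using $|R|\geq 11$ together with Lemma~\ref{lem 5}. Since $h_2$ and $h_3$ are shown to be helpers, the set $R\setminus\{v,v_1,v_2,v_3,v_4,h_1\}$ contains at least five good vertices, and these lie inside the faces of the configuration of Figure~\ref{fig lem3-2}. The paper then analyzes each region and shows that, in order to see the $v_i$'s from which it is separated, every such good vertex is forced to be an in-neighbor of $v$ (the regions where this is impossible are shown to contain no good vertices). That produces at least five good vertices agreeing on $v$, contradicting Lemma~\ref{lem 5}. Your proposal never invokes the size of $R$ at this stage, and the local contradiction it aims for simply does not exist.
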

  
  \begin{proof}
  Without loss of generality, assume that $v_1, v_2, v_3, v_4$ are arranged in a clockwise order around $v$ in the planar embedding of $\overrightarrow{H}$ and that they are in-neighbors of $v$. 
    Note that $v_2$ must see $v_4$ via some $h_1$ as $\overrightarrow{H}$ is triangle-free. Observe that the cycle $vv_2h_1v_4v$  separates $v_1$, and $v_3$.  
    Therefore, $v_1$ is forced to see $v_3$ via $h_1$ as they agree on $v$, and if either of them are adjacent to $v_2$ or $v_4$, then it will result in a triangle. 
    Without loss of generality, 
    we may assume that  
    $v_1, v_2 \in N^-(h_1)$, and 
    that  $v_3, v_4 \in N^+(h_1)$. 
   Furthermore, $v_1$ must see $v_2$ via some $h_2$, and $v_3$ must see $v_4$ via some $h_3$. As $h_2$ cannot see $v_4$, it is a helper. Similarly,  as $h_3$ cannot see $v_2$, it is a helper. 
   See Fig.~\ref{fig lem3-2} for pictorial references, whose naming will be followed during the proof.

  \begin{figure}[]
    \begin{center}
        \begin{tikzpicture}[scale=1] 
          \begin{scope}[very thick,decoration={
            markings,
            mark=at position 0.5 with {\arrow{latex}}}
            ]   
            \draw[postaction={decorate}] (-6,3) -- (0,0);
            \draw[postaction={decorate}] (-2,3) -- (0,0);
            \draw[postaction={decorate}] (2,3) -- (0,0);
            \draw[postaction={decorate}] (6,3) -- (0,0);
            
            \draw[postaction={decorate}] (-6,3) -- (-4,3);
            \draw[postaction={decorate}] (-4,3) -- (-2,3);
            \draw[postaction={decorate}] (2,3) -- (4,3);
            \draw[postaction={decorate}] (4,3) -- (6,3);
            
            \draw[postaction={decorate}] (-6,3) -- (0,6);
            \draw[postaction={decorate}] (-2,3) -- (0,6);
            \draw[postaction={decorate}] (0,6) -- (2,3);
            \draw[postaction={decorate}] (0,6) -- (6,3);

            \node[circle,draw,fill=white!20] at (0,6) {$h_1$};
            \node[circle,draw,fill=white!20] at (0,0) {$v$};
            
            \node[circle,draw,fill=white!20] at (-6,3) {$v_1$};
            \node[circle,draw,fill=white!20] at (-4,3) {$h_2$};
            \node[circle,draw,fill=white!20] at (-2,3) {$v_2$};
            \node[circle,draw,fill=white!20] at (2,3) {$v_3$};
            \node[circle,draw,fill=white!20] at (4,3) {$h_3$};
            \node[circle,draw,fill=white!20] at (6,3) {$v_4$};

            \node[] at (-2.5,4) {$F_{12}$};
            \node[] at (-2.5,2) {$F_{11}$};
            \node[] at (2.5,4) {$F_{32}$};
            \node[] at (2.5,2) {$F_{31}$};
            \node[] at (0,3) {$F_{2}$};
            \node[] at (-3,5.5) {$F_{0}$};
          \end{scope}    
          
        \end{tikzpicture}  
    
    \end{center}
    \caption{Four good vertices $v_1,v_2,v_3, v_4$ agreeing with each other
  on a vertex $v$.}
    \label{fig lem3-2}
\end{figure}
  
First suppose that $F_{11}$ is non-empty. Then the good 
vertices of $F_{11}$ must see $v_3, v_4$ via $v$, 
and thus, they must be out-neighbors of $v$. Notice that, $F_{32}$ must be empty as a good vertex from it cannot see the good vertices of $F_{11}$. On the other hand, any good vertices from $F_0, F_2$ and $F_{31}$ must see the good vertices of $F_{11}$ via $v$ by being its in-neighbor. 
Therefore, if any of $F_0, F_2$ and $F_{31}$ has a good vertex in it, then $v$ will have five good neighbors agreeing on it which is forbidden by Lemma~\ref{lem 5}. Hence the regions 
$F_0, F_2$ and $F_{31}$ are all empty. Also, as the good vertices of $F_{11}$ must be all out-neighbors of $v$, by Lemma~\ref{lem 5} we have $|F_{11}| \leq 4$. Moreover, as the vertices of $F_{11}$ cannot see the vertex $h_1$ in any way, $h_1$ functions as a helper. As $\overrightarrow{H}$ has at least $11$ good vertices, there must be at least two good vertices inside $F_{12}$. However, as the good vertices of $F_{12}$ must see $v_3,v_4$ via $h_1$ hence have no way to see $v$, then vertex $v$ also must be a helper. This forces $F_{12}$ to have at least $3$ good vertices. All these good vertices must see $v_3, v_4$ via $h_1$ which will force five vertices to agree on $h_1$, a contradiction. Therefore, the region $F_{11}$ must be empty. Similarly, one can show that the region $F_{31}$ must also be empty.

Next suppose that $F_{12}$ is non-empty.  
The good vertices from $F_{12}$ must see $v_3, v_4$ via 
$h_1$. Also, the good vertices in $F_0, F_2$, and $F_{32}$ 
must see the good vertices of $F_{12}$ via $h_1$. This 
means, except for $v$ and $h_1$, 
every good vertex is adjacent to $h_1$. 
This implies that at least $5$ good vertices 
agree on $h_1$, a contradiction to Lemma~\ref{lem 5}.
Therefore, the region $F_{12}$ must be empty.
Similarly, one can show that the region 
$F_{32}$ must also be empty.

The good vertices from the regions $F_0$ (resp., $F_{2}$) must see 
$v_2$ and $v_3$ (resp., $v_1$ and $v_4$) via $h_1$ and $v$. 
However, a good vertex from $F_0$ (resp., $F_2$) cannot see both $v_2$ and $v_3$ (resp., $v_1$ and $v_4$)   via $h_1$, 
and hence it must be adjacent to $v$ as well. 
This means,
except for $v$ and $h_1$, 
every good vertex is adjacent to $v$. 
This implies that at least $5$ good vertices 
agree on $v$, a contradiction.  
 \end{proof}
 
 \begin{figure}
    \begin{center}
        \begin{tikzpicture}[scale=0.9] 
          \begin{scope}[very thick,decoration={
            markings,
            mark=at position 0.5 with {\arrow{latex}}}
            ]   
            \draw[postaction={decorate}] (-3,3) -- (0,0);
            \draw[postaction={decorate}] (0,3) -- (0,0);
            \draw[postaction={decorate}] (3,3) -- (0,0);
            \draw[postaction={decorate}] (0,0) -- (4,0);
            
            \draw[postaction={decorate}] (0,6) -- (-3,3);
            \draw[postaction={decorate}] (3,3) -- (0,6);

            \node[circle,draw,fill=white!20] at (0,6) {$h_1$};
            \node[circle,draw,fill=white!20] at (0,0) {$v$};
            
            \node[circle,draw,fill=white!20] at (-3,3) {$v_1$};
            \node[circle,draw,fill=white!20] at (0,3) {$v_2$};
            \node[circle,draw,fill=white!20] at (3,3) {$v_3$};
            \node[circle,draw,fill=white!20] at (4,0) {$v_4$};

            \node[] at (0,4.3) {$F_{1}$};
            \node[] at (3,4.5) {$F_{0}$};
          \end{scope}    
          
        \end{tikzpicture}  
    
    \end{center}
    \caption{Three good vertices $v_1,v_2,v_3$ disagree with a fourth good vertex $v_4$ on a vertex $v$.}
    \label{fig lem3-3}
\end{figure}
 
 Using the above result, we prove the following.

 \begin{lemma}\label{lem 3,1}
 It is not possible to have three good vertices $v_1,v_2,v_3$ disagree with a fourth good vertex $v_4$ on a vertex $v$. 
 \end{lemma}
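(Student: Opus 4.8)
\medskip

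The plan is to mimic the structure of the proofs of Lemmas~\ref{lem 5} and \ref{lem 4,0}: fix the planar embedding, use the cyclic order of $v_1,v_2,v_3,v_4$ around $v$ together with the fact that every pair of good vertices must see each other, and then derive that too many good vertices are forced to agree on a single vertex, contradicting Lemma~\ref{lem 5} (or possibly Lemma~\ref{lem 4,0}). Concretely, I would set up notation so that $v_1,v_2,v_3 \in N^{\alpha}(v)$ and $v_4 \in N^{\bar\alpha}(v)$ for $\{\alpha,\bar\alpha\}=\{+,-\}$; without loss of generality take $v_1,v_2,v_3$ to be out-neighbours of $v$ and $v_4$ an in-neighbour. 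First I would observe that $v_1,v_2,v_3$ all agree on $v$, so by Lemma~\ref{lem 4,0} there cannot be a fourth good vertex agreeing with them on $v$ — this is the tool that will cap how many good vertices can "escape" through $v$.

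\medskip

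Next I would use planarity and triangle-freeness. Since $v_1$ and $v_3$ are non-adjacent good vertices, they must see each other via some vertex $h_1$, and I would argue (as in the figure, Fig.~\ref{fig lem3-3}) that $h_1$ must lie in the outer region relative to the $4$-cycle-ish configuration, with $v_2$ sitting "inside" the region $R_{v v_1 h_1 v_3 v}$ and $v_4$ outside it. The key local deductions are: any good vertex lying in the region bounded by $v v_1 h_1 v_3 v$ (other than $v_2$) must see $v_4$, and the only routes available to see $v_4$ are either through $v$ (forcing it to be an in-neighbour of $v$, hence agreeing with $v_1,v_2,v_3$ on... wait, it would be an in-neighbour, so it disagrees with $v_1,v_2,v_3$ and agrees with $v_4$) or through $h_1$ (which typically forces a triangle or is blocked). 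I would then count: the vertices $v,v_1,v_2,v_3,v_4,h_1$ account for at most $6$ vertices, leaving at least $|R|-5 \geq 6$ good vertices unaccounted for, all of which must be distributed among the few regions created by this configuration, and in each such region the "seeing $v_4$" or "seeing $v_2$" constraint forces them to be neighbours of $v$ of a prescribed orientation. This yields at least four good vertices agreeing with each other on $v$ (on the in-side or the out-side), contradicting Lemma~\ref{lem 4,0}; if the bookkeeping only yields the out-side, I can instead combine $v_1,v_2,v_3$ with one more to contradict Lemma~\ref{lem 4,0} directly, or push to five and use Lemma~\ref{lem 5}.

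\medskip

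The main obstacle I anticipate is the case analysis on where the auxiliary vertices $h_1$ (and possibly further helpers $w_1,w_2$ as drawn in Fig.~\ref{fig lem3-3}) can be placed in the embedding, and correctly arguing that the various "seeing" obligations cannot all be discharged through $h_1$-type vertices and must therefore pass through $v$. In particular one must rule out configurations where two good vertices in disjoint regions both try to see $v_4$ via distinct helpers — here the planarity argument ("they cannot see each other", exactly as used in the proof of Lemma~\ref{lem 4,0}) should force at least one region to be empty of good vertices, which then tightens the count enough to reach the contradiction. I would organise the proof as: (i) fix orientations and embedding; (ii) locate $h_1$ and partition the plane into regions; (iii) in each region argue the resident good vertices are in-neighbours of $v$; (iv) count and invoke Lemma~\ref{lem 4,0} (or Lemma~\ref{lem 5}).
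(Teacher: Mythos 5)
Your overall plan does follow the same skeleton as the paper's proof: let $h_1$ be the vertex via which $v_1$ sees $v_3$, let the cycle $vv_1h_1v_3v$ split the plane into a region $A$ containing $v_2$ and a region $B$ containing $v_4$, observe that a remaining good vertex can only reach the far side via $v$ or $h_1$ (adjacency to both $v_1$ and $h_1$, say, would create a triangle), and finish by counting against Lemma~\ref{lem 4,0} using $|R|\geq 11$. But there is a genuine gap at exactly the point you flag as ``the main obstacle'': you assert that seeing $v_4$ through $h_1$ ``typically forces a triangle or is blocked,'' so that the seeing obligations ``must therefore pass through $v$,'' and your step (iii) accordingly tries to make all resident good vertices neighbours of $v$ and get four good vertices agreeing on $v$. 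This is backwards: a good vertex in $A$ can perfectly well see $v_4$ via $h_1$ without creating any triangle. What is actually capped is the route through $v$ --- any vertex of $A$ seeing $v_4$ via $v$ becomes a fourth good vertex agreeing with $v_1,v_2,v_3$ on $v$ (and in the $B$-side case at most two vertices may see $v_2$ via $v$ before $v_4$ gains three agreeing companions), both by Lemma~\ref{lem 4,0}. So almost all the remaining good vertices are funnelled through $h_1$, not through $v$, and the contradiction has to be located at $h_1$, which your plan never does.

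The missing idea, which is how the paper closes every case, is a pigeonhole count of good neighbours of $h_1$. If both $A$ and $B$ contain good vertices, then (since routes via $v$ are excluded as above, and via $v_1,v_3$ are excluded by triangle-freeness) every vertex of $R\setminus\{v,v_1,v_2,v_3,v_4,h_1\}$, of which there are at least five, must be adjacent to $h_1$; together with $v_1,v_3$ this gives $h_1$ at least seven good neighbours, so four of them agree on $h_1$, contradicting Lemma~\ref{lem 4,0}. If all remaining good vertices lie in $A$, they all see $v_4$ via $h_1$ and hence at least five of them agree with each other on $h_1$; if they all lie in $B$, at most two see $v_2$ via $v$, so at least three see $v_2$ via $h_1$, and these three together with whichever of $v_1,v_3$ lies on their side of $h_1$ again give four good vertices agreeing on $h_1$. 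Without this counting at $h_1$ (and with the incorrect claim that the $h_1$-route is blocked), your argument does not close; the part of your sketch that does work (the cap on routes through $v$, and the region bookkeeping) is the easy half of the paper's proof.
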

  
  \begin{proof}
  Assume that $v_1, v_2, v_3, v_4$ are arranged in a clockwise order around $v$ in the planar embedding of $\overrightarrow{H}$ and that 
  $v_1, v_2, v_3 \in N^-(v)$ while       $ v_4 \in N^+(v)$.
    Note that $v_1$ must see $v_3$ via some $h_1$. Without loss of generality, assume that $v_1 \in N^-(h_1)$ and $v_3 \in N^+(h_1)$. 
   See Fig.~\ref{fig lem3-3} for pictorial references, whose naming will be followed during the proof.

  Observe that if any good vertex, other than $v_2$, of $F_1$ see a good vertex  of $F_0$ via $v$, then $4$ good vertices will agree on $v$ which is forbidden due to Lemma~\ref{lem 4,0}. That means, every good vertex, other than $v_2$, of $F_1$ must be adjacent to $h_1$, and if $F_1$ has at least one good vertex other than $v_2$, then all good vertices of $F_0$ are also adjacent to $h_1$. That will force $h_1$ to be adjacent to at least $7$ good neighbors. That will force at least $4$ good vertices 
  to agree on $h_1$, a contradiction due to Lemma~\ref{lem 4,0}. Thus, the only good vertex in $F_1$ must be $v_2$. In particular, $|F_1| = 1$.

  Next note that, at most $3$ vertices (including $v_4$) of $F_0$ can see $v_2$ via 
  $v$, and at most $2$ vertices of $F_0$ can see $v_2$ via $h_1$ due to the forbidden structure described in Lemma~\ref{lem 4,0}. Therefore, $|F_0| \leq 5$. 
  
  That means the number of good vertices in $\overrightarrow{H}$ is
  $$|R| \leq |F_0| + |F_1| + |\{v, v_1, v_3, h_1\}| \leq 5 + 1 + 4 = 10,$$
  which is a contradiction. 
 \end{proof}
 
 Now we focus on proving that a vertex $v$ cannot have two good in-neighbors and two good out-neighbors. The proof is divided into two cases. The first case is as follows.

  \begin{lemma}\label{lem 2,2 alt}
Let $v_1,v_2,v_3, v_4$ be good neighbors of a vertex $v$ arranged in a clockwise order around $v$.   It is not possible to have  $v_1,v_3 \in N^+(v)$ and 
$v_2, v_4 \in N^-(v)$.
 \end{lemma}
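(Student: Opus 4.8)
The plan is to force a $K_{2,4}$ out of this configuration, with $v$ and a second vertex $h$ as its two hubs and $v_1,v_2,v_3,v_4$ as the four common neighbours, and then to show that $h$ simply cannot ``absorb'' the remaining good vertices of $R$.

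First I would note that $v_1,v_3$, being two out-neighbours of $v$, are non-adjacent by triangle-freeness, hence see each other through a directed $2$-path, say $v_1\to h_1\to v_3$ after relabelling $\{v_1,v_3\}$ and $\{v_2,v_4\}$ simultaneously; moreover $h_1\not\sim v$, again by triangle-freeness. The $4$-cycle $vv_1h_1v_3$ strictly separates $v_2$ from $v_4$ in the embedding, since in the clockwise rotation at $v$ the vertex $v_2$ lies between $v_1$ and $v_3$ and $v_4$ lies between $v_3$ and $v_1$, while neither $v_2$ nor $v_4$ lies on the cycle (both are adjacent to $v$, whereas $h_1$ is not). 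Now $v_2,v_4$ are two in-neighbours of $v$, hence non-adjacent and unable to see each other through $v$, so the $2$-path realising ``$v_2$ sees $v_4$'' must meet this $4$-cycle at a vertex; that midpoint is not adjacent to $v$ (else a triangle with $v_2$), hence is not $v_1$ or $v_3$, and it is not $v$ either (a $2$-path through $v$ joining two in-neighbours is impossible), so it must be $h_1$. Writing $h:=h_1$, we obtain $h\sim v_1,v_2,v_3,v_4$ and $h\not\sim v$, so $\{v,h\}\cup\{v_1,v_2,v_3,v_4\}$ induces a $K_{2,4}$ whose four faces are the quadrilaterals $vv_ihv_{i+1}$ (indices modulo $4$), the face structure being forced by the rotation at $v$.

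Next I would determine $N(h)\cap R$. On $h$ we have $v_1\in N^-(h)$ and $v_3\in N^+(h)$, so by Lemma~\ref{lem 3,1} the vertices $v_2,v_4$ cannot both lie in $N^-(h)$ nor both in $N^+(h)$ (either would give three good vertices agreeing on $h$ and a fourth disagreeing); and by the same lemma there is no third good vertex in $N^-(h)$ and none in $N^+(h)$. Hence $N^-(h)\cap R$ and $N^+(h)\cap R$ each have exactly two elements and $N(h)\cap R=\{v_1,v_2,v_3,v_4\}$: no other good vertex is adjacent to $h$.

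Finally I would count. Any good vertex $w\notin\{v,h,v_1,v_2,v_3,v_4\}$ lies in the interior of some face $vv_ihv_{i+1}$, so all its edges and the midpoints of all $2$-paths issuing from it lie in the closed face; thus $w$ can see the two good neighbours $c,d$ of $v$ that are not on that face only via $v$ or via $h$ (it is non-adjacent to $v_i,v_{i+1}$ by triangle-freeness, and cannot reach $c$ or $d$ directly or through a vertex outside the closed face). This is where the alternating hypothesis is essential: for each face the missing pair $\{c,d\}$ consists of one in-neighbour and one out-neighbour of $v$, so $w$ can see at most one of $c,d$ via $v$ — seeing both would force $w$ to be simultaneously an in- and an out-neighbour of $v$ — and therefore $w$ sees the other one via $h$, i.e.\ $w\in N(h)$, contradicting the previous step. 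Hence no face has a good vertex in its interior, $R\subseteq\{v,h,v_1,v_2,v_3,v_4\}$, and $|R|\le 6<11$, a contradiction. I expect the planar bookkeeping in the second step — that the two separating $2$-paths are forced to share the midpoint $h$, and the identification of the faces of the resulting $K_{2,4}$ — to be the delicate part; everything after that is a short deduction from Lemma~\ref{lem 3,1} and triangle-freeness.
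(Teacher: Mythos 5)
Your proof is correct and takes essentially the same route as the paper: both force the single intermediate vertex $h_1=h$ to be adjacent to all of $v_1,\dots,v_4$ (a planar $K_{2,4}$ with hubs $v$ and $h$) and then invoke Lemma~\ref{lem 3,1} at $v$ and at $h$ to show no further good vertex can sit in any of the four quadrilateral regions, contradicting $|R|\ge 11$. The differences are only presentational (the paper gets the three-versus-one contradiction directly at $v$ or $h_1$, while you first pin down $N(h)\cap R=\{v_1,v_2,v_3,v_4\}$ and funnel everything through $h$), and your parenthetical ``it is non-adjacent to $v_i,v_{i+1}$'' should read that $c,d$ are non-adjacent to $v_i,v_{i+1}$ (they share the neighbour $v$), a harmless slip.
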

  
  \begin{proof}
   Suppose that $v_1$ sees $v_3$ via some $h_1$. Then $v_2$ is forced to see $v_4$ via $h_1$ as well. This, without loss of generality, forces the embedding presented in  Fig.~\ref{fig lem3-6}, whose naming will be followed during the proof.

  \begin{figure}
    \begin{center}
        \begin{tikzpicture}[scale=1] 
          \begin{scope}[very thick,decoration={
            markings,
            mark=at position 0.5 with {\arrow{latex}}}
            ]   
            \draw[postaction={decorate}] (0,0) -- (-6,3);
            \draw[postaction={decorate}] (-2,3) -- (0,0);
            \draw[postaction={decorate}] (0,0) -- (2,3);
            \draw[postaction={decorate}] (6,3) -- (0,0);
            
            
            \draw[] (-6,3) -- (0,6);
            \draw[] (-2,3) -- (0,6);
            \draw[] (0,6) -- (2,3);
            \draw[] (0,6) -- (6,3);

            \node[circle,draw,fill=white!20] at (0,6) {$h_1$};
            \node[circle,draw,fill=white!20] at (0,0) {$v$};
            
            \node[circle,draw,fill=white!20] at (-6,3) {$v_1$};
            \node[circle,draw,fill=white!20] at (-2,3) {$v_2$};
            \node[circle,draw,fill=white!20] at (2,3) {$v_3$};
            \node[circle,draw,fill=white!20] at (6,3) {$v_4$};

            \node[] at (-4,3) {$F_{1}$};
            \node[] at (0,3) {$F_{2}$};            
            \node[] at (4,3) {$F_{3}$};
            \node[] at (-3,5.5) {$F_{0}$};
          \end{scope}    
          
        \end{tikzpicture}  
    
    \end{center}
    \caption{The good neighbors $v_1,v_3 \in N^+(v)$ and $v_2, v_4 \in N^-(v)$.}
    \label{fig lem3-6}
\end{figure}

  Any good vertex of $F_i$ must see $v_{i+2}$  and $v_{i+3}$ via either $v$ or $h_1$, where $i \in \{0,1,2,3\}$ and the $+$ operation is taken modulo $4$.
  Note that, there are at least $5$ more vertices in $\bigcup F_i$, i.e., $|F_0|+|F_1|+|F_2|+|F_3| \geq 5$.
  This will force $v$ or $h_1$ to become adjacent to $7$ 
  good vertices among which three disagree with the fourth one on $v$ or $h_1$, respectively. 
    This is a contradiction to Lemma~\ref{lem 3,1}.
 \end{proof}

Now we present the second case.

  \begin{lemma}\label{lem 2,2 cons}
Let $v_1,v_2,v_3, v_4$ be good neighbours of a vertex $v$ arranged in a clockwise order around $v$.   It is not possible to have  $v_1,v_2 \in N^+(v)$ and 
$v_3, v_4 \in N^-(v)$.
 \end{lemma}
  
  \begin{proof}
Notice that $v_1$ must see $v_2$ via some $h_1$. 
Also assume that $v_3$ sees $v_4$ via the same $h_1$, see Fig.~\ref{fig lem3-5}(a).
Any good vertex $v'$ other than $v,v_1,v_2,v_3,v_4,h_1$ is separated from $v_i$, for some $i\in \{1,2,3,4\}$, by the $4$-cycle of the form $vv_jh_1v_kv$, where $j,k \in \{1,2,3,4\}$. Then either $v$ or $h_1$ is forced to be adjacent to four good vertices, among which three disagrees with the fourth one on $v$ or $h_1$. This contradicts Lemma~\ref{lem 3,1}.
Thus $v_3$ must see $v_4$ via a different vertex $h_2$. 
See Fig.~\ref{fig lem3-5}(b) for pictorial references, whose naming will be followed during the proof.

Note that a good vertex in $F_i$ where $i\in \{0,1,2\}$ is forced to not see any other good vertex via $v$ due to Lemma~\ref{lem 3,1}.
Thus, any good vertex of $F_1$ must see $v_3$ and $v_4$ via $h_1$, which is again a contradiction to Lemma~\ref{lem 3,1} at $h_1$. Hence $F_1$ does not contain a good vertex. Similarly, $F_2$ also doesn't contain any good vertices.
Therefore $F_0$ is non-empty.

Let $W = R \setminus   \{v,v_1,v_2,v_3,v_4,h_1,h_2\} = \{w_1, w_2, \cdots, w_t\}$. Observe that $|W| = t \geq 11 - 7 = 4$ and the vertices of $W$ must be contained in $F_0$.  Each vertex of $W$ can see at most two of the four good vertices $v_1,v_2,v_3,v_4$ via a particular vertex due to Lemma~\ref{lem 3,1}.       
    Let us assume that the separating cycle by the arcs $v_2v,vv_3$ and the paths used by $w_1$ to see $v_2$ and $v_3$ is $C$. Without loss of generality, we may assume that the vertices from $W\setminus \{ w_1 \}$ are not separated from $v_1$ and $v_4$ by $C$. 
    Let's examine a few possible cases.
    
    \begin{itemize}
    \item  If $w_1$ sees $v_1, v_2$ via $h_3$ and $v_3,v_4$ via $h_4$, then both $w_2$ and $w_3$ are forced to see 
    $ v_2$ via $h_3$. Thus $h_3$ has three good neighbours disagreeing on it with a fourth one. 
    This is a contradiction to Lemma~\ref{lem 3,1}. 
    
     \item  If $w_1$ sees $v_1, v_3$ via $h_3$ and $v_2,v_4$ via $h_4$, then the graph $\overrightarrow{H}$ no longer remains planar. 
    
  \item  If $w_1$ sees $v_1, v_2$ via $h_3$ and $v_3$ via $h_4$ (or by being adjacent to it) and $v_4$ via $h_5$ (or by being adjacent to it), then $w_2$ is not able to see $v_3$.
  
   \item  If $w_1$ sees $v_1$ via $h_3$
   (or by being adjacent to it)  and $ v_2$ via $h_4$ (or by being adjacent to it) and $v_3$ via $h_5$ (or by being adjacent to it)
   and $v_4$ via $h_6$ (or by being adjacent to it), then $w_2$ is not able to see $v_2,v_3$.
        \end{itemize}

    Thus, we are done.    
    \end{proof}

\begin{figure}
     \begin{subfigure}{0.5\textwidth}
    \begin{center}
      \begin{tikzpicture}[scale=0.6] 
         \begin{scope}[very thick,decoration={
          markings,
          mark=at position 0.5 with {\arrow{latex}}}
          ]   
          \draw[postaction={decorate}] (-5,3) -- (0,0);
          \draw[postaction={decorate}] (-1,3) -- (0,0);
          \draw[postaction={decorate}] (0,0) -- (1,3);
          \draw[postaction={decorate}] (0,0) -- (5,3);
          
          \draw[thick]  (-3,3) -- (-5,3);
          \draw[thick] (-3,3) -- (-1,3);
          \draw[thick] (1,3) -- (3,3);
          \draw[thick] (3,3) -- (5,3);

          \draw[thick] (0,6) -- (-5,3);
          \draw[thick] (0,6) -- (-1,3);
          \draw[thick] (0,6) -- (1,3);
          \draw[thick] (0,6) -- (5,3);
          

          \node[circle,draw,fill=white!20] at (0,0) {$v$};
          
          \node[circle,draw,fill=white!20] at (-5,3) {$v_1$};
          \node[circle,draw,fill=white!20] at (0,6) {$h_1$};
          \node[circle,draw,fill=white!20] at (-1,3) {$v_2$};
          \node[circle,draw,fill=white!20] at (1,3) {$v_3$};
          \node[circle,draw,fill=white!20] at (5,3) {$v_4$};

        \end{scope}    
        
      \end{tikzpicture}   
  
    \end{center}
    \subcaption{}
    \label{fig lem3-5-a}
  \end{subfigure}
  \begin{subfigure}{0.5\textwidth}
    \begin{center}
      \begin{tikzpicture}[scale=0.6] 
        \begin{scope}[very thick,decoration={
          markings,
          mark=at position 0.5 with {\arrow{latex}}}
          ]   
          \draw[postaction={decorate}] (-5,3) -- (0,0);
          \draw[postaction={decorate}] (-1,3) -- (0,0);
          \draw[postaction={decorate}] (0,0) -- (1,3);
          \draw[postaction={decorate}] (0,0) -- (5,3);
          
          \draw[thick]  (-3,3) -- (-5,3);
          \draw[thick] (-3,3) -- (-1,3);
          \draw[thick] (1,3) -- (3,3);
          \draw[thick] (3,3) -- (5,3);
          

          \node[circle,draw,fill=white!20] at (0,0) {$v$};
          
          \node[circle,draw,fill=white!20] at (-5,3) {$v_1$};
          \node[circle,draw,fill=white!20] at (-3,3) {$h_1$};
          \node[circle,draw,fill=white!20] at (-1,3) {$v_2$};
          \node[circle,draw,fill=white!20] at (1,3) {$v_3$};
          \node[circle,draw,fill=white!20] at (3,3) {$h_2$};
          \node[circle,draw,fill=white!20] at (5,3) {$v_4$};

          \node[] at (0,2) {$F_0$};
          \node[] at (-2,2) {$F_1$};
          \node[] at (2,2) {$F_2$};
          
        \end{scope}    
        
      \end{tikzpicture}  
  
      \end{center}
      
    \subcaption{}
    \label{fig lem3-5-b}
  \end{subfigure}

  \caption{The good neighbors $v_1,v_2 \in N^+(v)$ and $v_3, v_4 \in N^-(v)$, (a) $v_1$ sees $v_2$ via $h_1$ and $v_3$ sees $v_4$ via the same $h_1$, (b) $v_1$ sees $v_2$ via $h_1$ and $v_3$ sees $v_4$ via some $h_2$}
  
  \label{fig lem3-5}
\end{figure}

\begin{lemma}\label{lem 4}
 It is not possible for a vertex to have at least four good neighbours.  
\end{lemma}

\begin{proof}
    Follows directly from Lemmata~\ref{lem 4,0},~\ref{lem 3,1},~\ref{lem 2,2 alt} and~\ref{lem 2,2 cons}.
\end{proof}

 \subsection{Forbidding three good vertices in a neighbourhood}
In this subsection, we want to improve the results 
of the previous subsection. 
Indeed, the previous results will be used to prove so. 

\begin{figure}
  \begin{subfigure}{0.5\textwidth}
    \begin{center}
      \begin{tikzpicture}[scale=0.8] 
        \begin{scope}[very thick,decoration={
          markings,
          mark=at position 0.5 with {\arrow{latex}}}
          ]   
          \draw[postaction={decorate}] (0,0) -- (-4,3);
          \draw[postaction={decorate}] (0,0) -- (0,3);
          \draw[postaction={decorate}] (0,0) -- (4,3);
          
          
          \draw[] (-4,3) -- (0,6);
          \draw[] (0,3) -- (0,6);
          \draw[] (0,6) -- (4,3);

          \node[circle,draw,fill=white!20] at (0,6) {$h_1$};
          \node[circle,draw,fill=white!20] at (0,0) {$v$};
          
          \node[circle,draw,fill=white!20] at (-4,3) {$v_1$};
          \node[circle,draw,fill=white!20] at (0,3) {$v_2$};
          \node[circle,draw,fill=white!20] at (4,3) {$v_3$};

          \node[] at (-2,3) {$F_{1}$};
          \node[] at (2,3) {$F_{2}$};
          \node[] at (-3,5.5) {$F_{0}$};
        \end{scope}    
        
      \end{tikzpicture}   
  
    \end{center}
    \subcaption{}
    \label{fig 3agree-a}
  \end{subfigure}
  \begin{subfigure}{0.5\textwidth}
    \begin{center}
      \begin{tikzpicture}[scale=0.8] 
        \begin{scope}[very thick,decoration={
          markings,
          mark=at position 0.5 with {\arrow{latex}}}
          ]   
          \draw[postaction={decorate}] (0,0) -- (-4,3);
          \draw[postaction={decorate}] (0,0) -- (0,3);
          \draw[postaction={decorate}] (0,0) -- (4,3);
          
          
          \draw[] (-4,3) -- (0,6);
          \draw[] (0,6) -- (4,3);
          \draw[] (-4,3) -- (-2,3);
          \draw[] (-2,3) -- (0,3);
          \draw[] (0,3) -- (2,3);
          \draw[] (2,3) -- (4,3);

          \node[circle,draw,fill=white!20] at (0,6) {$h_3$};
          \node[circle,draw,fill=white!20] at (-2,3) {$h_1$};
          \node[circle,draw,fill=white!20] at (2,3) {$h_2$};

          \node[circle,draw,fill=white!20] at (0,0) {$v$};
          
          \node[circle,draw,fill=white!20] at (-4,3) {$v_1$};
          \node[circle,draw,fill=white!20] at (0,3) {$v_2$};
          \node[circle,draw,fill=white!20] at (4,3) {$v_3$};

          \node[] at (-1,2) {$F_{1}$};
          \node[] at (1,2) {$F_{2}$};
          \node[] at (0,4.5) {$F_{3}$};
          \node[] at (-3,5.5) {$F_{0}$};
        \end{scope}    
        
      \end{tikzpicture}   
  
    \end{center}
    \subcaption{}
    \label{fig 3agree-b}
  \end{subfigure}
  
  \caption{Three good vertices $v_1,v_2,v_3$ agreeing on another good vertex $v$.}
  
  \label{fig 3agree}
\end{figure}

\begin{lemma}\label{lem 3,0}
  Three good vertices $v_1,v_2,v_3$ cannot agree with each other on a vertex $v$.
\end{lemma}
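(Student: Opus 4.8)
The plan is to proceed exactly along the lines that worked for Lemma~\ref{lem 4,0}, but now exploiting the extra slack that comes from having only three good neighbors of $v$ together with the already-established impossibility of four good vertices agreeing on a vertex (Lemma~\ref{lem 4,0}) and the impossibility of a $3$-against-$1$ disagreement pattern (Lemma~\ref{lem 3,1}). So suppose for contradiction that $v_1,v_2,v_3$ all agree on $v$, say all three are out-neighbors of $v$, arranged clockwise around $v$ in the fixed planar embedding. Since $H$ is triangle-free, $v_1$ must see $v_3$ via some vertex $h_1$; without loss of generality $v_1\in N^-(h_1)$ and $v_3\in N^+(h_1)$. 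The closed curve through $v,v_1,h_1,v_3$ bounding the face $R_{vv_1h_1v_3v}$ separates the plane; one side $A$ contains $v_2$ and the other side $B$ does not.

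First I would analyze where the remaining good vertices of $R\setminus\{v,v_1,v_2,v_3,h_1\}$ can live. Any good vertex $w$ lying in $B$ must still see $v_2$, and since $w$ is cut off from $v_2$ by the curve, the only common neighbor available is $v$ or $h_1$. If such a $w$ sees $v_2$ via $v$, then $v$ has $v_1,v_2,v_3,w$ as good neighbors with $w\in N^-(v)$ while $v_1,v_2,v_3\in N^+(v)$ — a $3$-against-$1$ disagreement on $v$, contradicting Lemma~\ref{lem 3,1}; and if it sees $v_2$ via $h_1$ then $h_1$ acquires four good neighbors (together with $v_1,v_3$, and $v_2$ via another common vertex forced similarly), which contradicts Lemma~\ref{lem 4,0} once we push the count high enough. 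Symmetrically, good vertices inside $A$ but outside the triangle-ish region bounded by the relevant subfaces must see $v_1$ (or $v_3$) via $v$ or $h_1$, and in each case one of $v$ or $h_1$ collects too many good neighbors, or a forbidden disagreement pattern appears. The bookkeeping here is the analogue of the $R_{vv_2h_1v_3v}$/$R_{vv_1h_1v_4v}$ discussion in Lemma~\ref{lem 4,0}: subdivide $A$ by the $2$-paths through which consecutive good neighbors see each other, argue that no two of these subregions can simultaneously contain good vertices (they would fail to see each other), and conclude that all but a bounded number of good vertices funnel through a single vertex $v$ or $h_1$, which then violates Lemma~\ref{lem 4,0}.

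The key quantitative input is $|R|\geq 11$, so $|R\setminus\{v,v_1,v_2,v_3,h_1\}|\geq 6$; combined with the fact that each such vertex is forced to be adjacent to (or see two of the $v_i$'s only through) $v$ or $h_1$, the pigeonhole principle yields at least three good vertices agreeing on $v$ or on $h_1$ in a way that reduces to a $4$-agreement or a $3$-against-$1$ pattern, contradicting Lemma~\ref{lem 4,0} or Lemma~\ref{lem 3,1}. I expect the proof to be somewhat shorter than Lemma~\ref{lem 4,0}'s, since here we do not need to produce the $h_2,h_3$ helpers explicitly — we only need that the "outside" good vertices are trapped behind the $v,v_1,h_1,v_3$ curve.

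\textbf{Main obstacle.} The delicate point will be the case where a good vertex $w$ in $A$ (the $v_2$-side) sees the two "far" good vertices $v_1$ and $v_3$ through \emph{different} helpers rather than through $v$ or $h_1$ — mirroring the trickier branches in Lemma~\ref{lem 2,2 cons}. There one must rule out the configuration on planarity grounds or show that a second good vertex $w'$ then cannot see all of $v_1,v_2,v_3$. Handling this cleanly, and making sure the region-separation argument correctly accounts for whether $v_2$ itself lies on the same side as the escaping good vertices, is where I would spend the most care; everything else is a routine propagation of the "see via $v$ or via $h_1$" dichotomy plus counting.
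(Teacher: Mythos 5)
Your setup is sound as far as it goes: with $v_1,v_2,v_3\in N^+(v)$ and $v_1$ seeing $v_3$ via $h_1$, a good vertex $w$ on the side of the cycle $vv_1h_1v_3v$ not containing $v_2$ must indeed see $v_2$ via $v$ or $h_1$, and the via-$v$ case is correctly killed by Lemma~\ref{lem 3,1}. But the argument has two real problems. First, "$h_1$ acquires four good neighbors, which contradicts Lemma~\ref{lem 4,0} once we push the count high enough" is not a proof step: four good neighbors of $h_1$ may split $2$--$2$ in orientation, which contradicts neither Lemma~\ref{lem 4,0} nor Lemma~\ref{lem 3,1}; the paper closes this off by also invoking Lemmas~\ref{lem 2,2 alt} and~\ref{lem 2,2 cons} (together these say no vertex has four good neighbors at all), whereas you explicitly restrict your toolkit to Lemmas~\ref{lem 4,0} and~\ref{lem 3,1}, and your pigeonhole fallback ("at least three good vertices agreeing ... in a way that reduces to a $4$-agreement or a $3$-against-$1$ pattern") is not carried out and does not obviously work with only $|R\setminus\{v,v_1,v_2,v_3,h_1\}|\geq 6$ vertices, since not all of them need attach to $v$ or $h_1$.

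Second, and more importantly, the case you flag as the "main obstacle" — several good vertices on the $v_2$-side, each seeing $v_1$ and $v_3$ through fresh helpers rather than through $v$ or $h_1$ — is precisely where the whole difficulty of the lemma lies, and you leave it unresolved. Your stated hope that one can avoid producing the helpers $h_2,h_3$ explicitly runs against how the proof actually has to go: the paper first shows that the three pairs $v_iv_j$ cannot share a single helper, thereby building the hexagon $v_1h_1v_2h_2v_3h_3$ with $v$ inside, then empties the three inner sectors (any good vertex there would force a fourth good neighbor on $v$ or on some $h_i$, contradicting Lemmas~\ref{lem 3,1}, \ref{lem 2,2 alt}, \ref{lem 2,2 cons}), and only then, with at least four remaining good vertices trapped in the outer region, runs the final two-case analysis: if $w_1$ sees two of $v_1,v_2,v_3$ via one vertex $h_4$, planarity forces $w_2$ to reuse $h_4$ and $h_4$ gets four good neighbors; if $w_1$ uses three distinct vertices, planarity prevents $w_2$ from seeing all three $v_i$. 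None of this is in your sketch, so as written the proposal is an outline of the easy half plus an accurate diagnosis of the hard half, not a proof.
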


\begin{proof}

  Without loss of generality, suppose that $v_1,v_2,v_3$ are arranged in a clockwise order around $v$ in the planar embedding of $\overrightarrow{H}$ such that $v_1,v_2,v_3$ are out-neighbours of $v$. 
  Let $v_1$ see $v_2$ via some $h_1$. Let $v_3$ also see $v_2$ via $h_1$. See Fig.~\ref{fig 3agree}(a) for pictorial references, whose naming will be followed during the proof.
  
  Any good vertices contained in $F_i$, where $i\in \{0,1,2\}$ must see $v_{i+2}$ ($+$ is taken modulo $3$) via $v$ or $h_1$. This forces $v$ or $h_1$ to have $4$ good vertices in their neighbourhood, which we proved to be forbidden in Lemma~\ref{lem 4}.
  Hence, $v_2$ must see $v_3$ via  $h_2$ and $v_3$ must see $v_1$ via  $h_3$. See Fig.~\ref{fig 3agree}(b) for pictorial references.
  Any good vertex in regions $F_i,i\in \{0,1,2\}$ sees $v_{i+2}$ ($+$ is taken modulo $3$) via $v$ or $h_i$, which contradicts Lemma~\ref{lem 4}.
  Hence $F_0, F_1, F_2$ are empty.

  Let  $W=R \setminus \{v, v_1,v_2,v_3,h_1,h_2,h_3\} = \{ w_1, w_2, \dots, w_t \}$ and, thus, $|W| = t \geq 11 - 7 = 4$. 
  Suppose that $w_1$ sees $v_1,v_3$ via the same vertex, say, $h_4$, and 
  the cycle $C_1 = v_1h_3v_3h_4v_1$ separates all vertices of 
  $W \setminus \{w_1\}$ from $v_2$. Then, all vertices of $W \setminus \{w_1\}$ must see $v_2$ via $h_4$, which will force $h_4$ to have at least $4$ good neighbors, contradicting Lemma~\ref{lem 4}.  Therefore, $w_1$ must have two internally vertex disjoint paths to see $v_1$ and $v_3$, respectively.   
  Let us assume that $C$ is the cycle induced by the arcs $v_1h_3,h_3v_3$ and the paths used by $w_1$ to see $v_1$ and $v_3$. Without loss of generality, we may assume that the vertices from $W\setminus \{ w_1 \}$ are  separated from $v_2$ by $C$.
  If $w_1$ sees $v_1$ (resp., $v_3$) via a vertex, then call it $h_4$ (resp., $h_5$).
 Note that, the only options for $w_2$ to see $v_2$ is via $h_4, h_5$, (if they exist) or $w_1$ (if $w_1$ is adjacent to $v_2$). In any of these cases, we contradict Lemma~\ref{lem 4} as a vertex ($h_4, h_5$ or $w_1$) having at least $4$ good vertices is forced. 
 \end{proof}

The final lemma in the similar direction follows.

\begin{lemma}\label{lem 2,1}
  It is not possible to have  two good vertices $v_1,v_2$ disagree with a third good vertex $v_3$ on a vertex $v$. 
\end{lemma}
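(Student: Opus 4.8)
\textbf{Proof plan for Lemma~\ref{lem 2,1}.}

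The plan is to mimic the structure of the proofs of Lemmas~\ref{lem 3,1} and~\ref{lem 3,0}, exploiting the fact that we now know $v$ has at most three good neighbors (by Lemmas~\ref{lem 2,2 alt} and~\ref{lem 2,2 cons}, no vertex has four good neighbors, and by Lemmas~\ref{lem 3,1} and~\ref{lem 3,0} the only admissible configuration of three good neighbors of $v$ is three of them mutually agreeing on $v$ — wait, that is ruled out by Lemma~\ref{lem 3,0}, and "two disagree with a third" is exactly what we are now ruling out). So the real content is: assume for contradiction that $v_1,v_2 \in N^\alpha(v)$ and $v_3 \in N^{\bar\alpha}(v)$, with $v_1,v_2,v_3$ in clockwise order around $v$ in the embedding, say $v_1,v_2 \in N^-(v)$ and $v_3 \in N^+(v)$. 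Since $H$ is triangle-free, $v_1$ must see $v_2$ via some vertex $h_1$; I would first check whether $v_3$ can also be seen via $h_1$ from the region structure, and whether $h_1$ can be good. If $h_1$ were good, then $h_1$ together with $v_1,v_2$ (and possibly $v$ or $v_3$) would overload a vertex; more precisely the standard move is: a good vertex trapped inside $R_{vv_1h_1v_2v}$ must see $v_3$ via $v$ or $h_1$, forcing four good neighbors on $v$ or $h_1$, contradicting the already-established "no four good neighbors" fact.

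The key steps, in order: (1) Fix the orientation/embedding as above and let $h_1$ be a vertex via which $v_1$ sees $v_2$; argue by the region-splitting technique (the face $vv_1h_1v_2v$ together with the position of $v_3$) that any good vertex inside the bounded region between $v_1$ and $v_2$ is forced onto $v$ or $h_1$ as a fourth good neighbor, hence no good vertex lies there, and $h_1$ itself must be a helper. (2) Conclude that $v_3$ cannot be seen from the "inner" side via $h_1$ alone in a way consistent with planarity, so in fact all of $W := R \setminus \{v,v_1,v_2,v_3,h_1\}$ — which has size $|W| = |R| - 5 \geq 6$ — must lie in one specific region, namely the region bounded by the $2$-path $v_1 h_1 v_2$ and the paths through $v$; here each $w \in W$ must see $v_3$, and its only routes are via $v$ or via $h_1$. (3) By Lemma~\ref{lem 3,0} (three good vertices cannot agree) applied at $v$, at most two vertices of $W$ can see $v_3$ via $v$ (otherwise $v$ has three good out-neighbors $v_3$ and two elements of $W$, i.e. three agreeing on $v$, contradiction) — actually since $v_3 \in N^+(v)$ and the $w$'s would also be in-or-out neighbors of $v$, I must be careful about agree/disagree, but the upshot is that only boundedly many $w$'s use $v$. (4) Hence at least $|W| - 2 \geq 4$ vertices of $W$ see $v_3$ via $h_1$, making $h_1$ adjacent to $v_1, v_2$, and these $\geq 4$ further good vertices; since $v_1,v_2,v_3$ all lie on one side, pigeonhole on the in/out split at $h_1$ gives either four good neighbors of $h_1$ (contradicting the no-four-good-neighbors conclusion) or three disagreeing with a fourth (contradicting Lemma~\ref{lem 3,1}) — in any case a contradiction.

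The main obstacle I anticipate is step (2)–(3): carefully pinning down, using the planar embedding, exactly which region every leftover good vertex is confined to, and the bookkeeping of which "via" routes remain available once $h_1$ is fixed as a helper. As in the earlier lemmas, one likely has to split into a small number of cases according to how a representative leftover vertex $w_1$ reaches $v_1,v_2,v_3$ (all three via one common vertex; two via one vertex and the third via another or by adjacency; each via a distinct vertex or adjacency), and in each case show that a second leftover vertex $w_2$ cannot be placed so as to see all of $v_1,v_2,v_3$. Since $|W| \geq 6$ here (a comfortable margin over the threshold $4$ used in Lemmas~\ref{lem 3,1} and~\ref{lem 3,0}), I expect each case to close by the same pigeonhole-plus-planarity argument already used, so the lemma should follow without genuinely new ideas — only a somewhat more delicate region analysis.
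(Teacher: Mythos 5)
Your step (1) coincides with the paper's opening move: any good vertex in the region of the cycle $vv_1h_1v_2v$ \emph{not} containing $v_3$ would have to see $v_3$ via $v$ or via $h_1$, giving $v$ or $h_1$ four good neighbours, which is impossible by Lemmas~\ref{lem 4,0}, \ref{lem 3,1}, \ref{lem 2,2 alt} and~\ref{lem 2,2 cons}; hence every vertex of $W=R\setminus\{v,v_1,v_2,v_3,h_1\}$ lies in the region $A$ that \emph{does} contain $v_3$. But your steps (2)--(4) then undercut this: you confine $W$ to a region in which ``its only routes to $v_3$ are via $v$ or via $h_1$'' --- that is precisely the region $B$ you have just shown to contain no good vertex at all. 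For the region where $W$ actually sits (the side containing $v_3$), a vertex $w\in W$ may see $v_3$ by adjacency or via fresh helpers inside $A$, so the counting in steps (3)--(4) (``at most two of $W$ use $v$, hence at least four use $h_1$, overload $h_1$'') never gets started. The ``at most two via $v$'' bound is also off on its own terms: $v$ already has three good neighbours, so even a single further good neighbour of $v$ contradicts the four-good-neighbour lemmas; no vertex of $W$ can see anything via $v$. So the chain you present as the key steps does not prove the lemma.

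What does close the argument --- and what the paper actually does --- is exactly the case analysis you defer to as an ``anticipated obstacle''. With all of $W$ (of size at least $6$) inside $A$, one notes that each $w_i$ can see at most two of $v_1,v_2,v_3$ via any single vertex (else that vertex acquires four good neighbours), and then splits into cases on how $w_1$ reaches $v_1,v_2,v_3$: if it sees two of them via one vertex $h_2$ and the third via another, a second vertex $w_2$ is forced to reuse $h_2$, creating a vertex with four good neighbours; if it uses three distinct vertices (or adjacencies), planarity traps $w_2$ so that it cannot see all of $v_1,v_2,v_3$. You name these cases but supply no argument for them, and since your main counting argument collapses, the proposal as written has a genuine gap at its core.
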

  
\begin{proof}
   Without loss of generality, suppose that $v_1,v_2,v_3$ are arranged in a clockwise order around $v$ in the planar embedding of $\overrightarrow{H}$ such that $v_1,v_2$ are in-neighbours of $v$ and $v_3$ 
   is an out-neighbour of $v$. Note that $v_1$ must see $v_2$ via some vertex $h_1$. Without loss of generality assume that the vertex $v_3$ 
  is placed in the exterior of the  closed curve induced by the cycle $vv_1h_1v_2$.  See Fig.~\ref{fig 2disagree1} for pictorial references, whose naming will be followed during the proof.

  Any good vertex contained in $F_1$ must see $v_3$ via  $v$ or $h_1$. This will force $v$ or $h_1$ to have at least four good neighbors contradicting  Lemma~\ref{lem 4}. 
  Thus $F_1$ is empty. 

  Let $W=R \setminus \{v, v_1,v_2,v_3,h_1\} = \{ w_1, w_2, \dots, w_t \}$ and thus, $|W| = t \geq 11 - 5 =  6$.
  Notice that $h_1$ is already adjacent to $2$ good vertices, namely, $v_1$ and $v_2$. 
  Thus, due to Lemma~\ref{lem 4} we know that at most one vertex from $W$ can be adjacent to $h_1$. Thus, without loss of generality we may assume that all vertices from $W \setminus \{w_1\}$ are non-adjacent to $h_1$. 

  Suppose that $w_2$ sees $v_1$ and $v_3$ via 
  a vertex $h_2$ and that the cycle $C_1 = vv_1h_2v_3v$ 
  separates the vertices of $W \setminus \{w_1, w_2\}$ from $v_2$. Thus, the vertices of $W \setminus \{w_1, w_2\}$ must see $v_2$ via $h_2$, forcing $h_2$ 
  to have at least $4$ good neighbors contradicting Lemma~\ref{lem 4}. 
  Therefore, $w_2$ must have two internally vertex disjoint paths to see $v_1$ and $v_3$, respectively.   
  Let us assume that $C$ is the cycle induced by the arcs $vv_1,vv_3$ and the paths used by $w_2$ to see $v_1$ and $v_3$. Without loss of generality, we may assume that the vertices from 
  $W \setminus \{w_1, w_2\}$ are  separated from $v_2$ by $C$. 

  Assume that $w_2$ sees $v_1, v_2$ via $h_1, h_3$, respectively. 
  The cycle $C = vv_2h_1w_1h_3v_3v$ separates all the vertices of $W \setminus \{w_1$\} from $v_2$.
If $w_2$ sees $v_1$ (resp., $v_3$) via some vertex, then let that vertex be $h_2$ (resp., $h_3$). 
Note that, the only options for $w_3$ to see $v_2$ is via $h_2, h_3$, (if they exist) or $w_2$ (if $w_2$ is adjacent to $v_2$). In any of these cases, we contradict Lemma~\ref{lem 4} as a vertex ($h_2, h_3$ or $w_2$) having at least $4$ good vertices is forced. 
\end{proof}

This implies  that the graph $\overrightarrow{H}$ does not have any vertex $v$ with at least three good neighbors. 

\begin{figure}
  \begin{center}
      \begin{tikzpicture}[scale=0.8] 
        \begin{scope}[very thick,decoration={
          markings,
          mark=at position 0.5 with {\arrow{latex}}}
          ]   
          \draw[postaction={decorate}] (-3,3) -- (0,0);
          \draw[postaction={decorate}] (3,3) -- (0,0);
          \draw[postaction={decorate}] (0,0) -- (4,0);
          
          \draw[] (0,6) -- (-3,3);
          \draw[] (3,3) -- (0,6);

          \node[circle,draw,fill=white!20] at (0,6) {$h_1$};
          \node[circle,draw,fill=white!20] at (0,0) {$v$};
          
          \node[circle,draw,fill=white!20] at (-3,3) {$v_1$};
          \node[circle,draw,fill=white!20] at (3,3) {$v_2$};
          \node[circle,draw,fill=white!20] at (4,0) {$v_3$};

          \node[] at (0,3) {$F_{1}$};
          \node[] at (3,4.5) {$F_{0}$};
        \end{scope}    
        
      \end{tikzpicture}  
  
  \end{center}
  \caption{Two good vertices $v_1,v_2$ disagree with a third good vertex $v_3$ on a vertex $v$.}
  \label{fig 2disagree1}
\end{figure}

\begin{lemma}\label{lem 3}
 It is not possible for a vertex to have at least three good neighbors.  
 \end{lemma}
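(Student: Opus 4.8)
\textbf{Proof plan for Lemma~\ref{lem 3}.}

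The plan is to combine all the preceding structural lemmas into a single case analysis based on how the good neighbors of a fixed vertex $v$ partition into in-neighbors and out-neighbors. Suppose, for contradiction, that some vertex $v$ has at least three good neighbors $v_1, v_2, v_3$; since only the interaction pattern matters, I would reduce immediately to exactly three good neighbors by ignoring any extra ones. Each $v_i$ lies in $N^+(v)$ or $N^-(v)$, so by the pigeonhole principle at least two of $v_1, v_2, v_3$ — say $v_a$ and $v_b$ — lie on the same side, i.e.\ $v_a, v_b \in N^\alpha(v)$ for some $\alpha \in \{+,-\}$. There are then only two cases for the third vertex $v_c$: either $v_c \in N^\alpha(v)$ as well, so all three agree on $v$; or $v_c \in N^{\bar\alpha}(v)$, so two good vertices ($v_a, v_b$) disagree with a third ($v_c$) on $v$.

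The first case is exactly the hypothesis ruled out by Lemma~\ref{lem 3,0}, and the second case is exactly what Lemma~\ref{lem 2,1} forbids. So in either case we reach a contradiction, and no vertex of $\overrightarrow{H}$ can have three good neighbors. I would write this up in essentially two or three sentences: state the pigeonhole split into ``all three on the same side'' versus ``a two–one split,'' invoke Lemma~\ref{lem 3,0} for the former and Lemma~\ref{lem 2,1} for the latter, and conclude.

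I do not expect any genuine obstacle here: all of the real work — the planarity arguments, the region-counting, the repeated appeals to the triangle-free condition and to Lemma~\ref{lem 3,1} — has already been absorbed into Lemmas~\ref{lem 3,0} and~\ref{lem 2,1}. The only mild subtlety is the reduction to exactly three good neighbors: if $|N(v) \cap R| \geq 3$ one should be careful to note that any configuration of three of them still falls under one of the two cases above, so that the lemma statement (``at least three'') is covered and not merely the case of exactly three. The cyclic order around $v$ in the planar embedding plays no role in this final step, since Lemmas~\ref{lem 3,0} and~\ref{lem 2,1} are stated without reference to any particular ordering of the three vertices.
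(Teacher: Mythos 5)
Your proposal is correct and matches the paper's proof, which simply states that the lemma follows directly from Lemmas~\ref{lem 3,0} and~\ref{lem 2,1}: the pigeonhole split you describe (all three good neighbors on the same side of $v$, versus a two--one split into $N^+(v)$ and $N^-(v)$) is exactly the implicit case analysis behind that one-line deduction. Your write-up just makes explicit what the paper leaves to the reader.
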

  
  \begin{proof}
Follows directly from Lemmata~\ref{lem 3,0} and~\ref{lem 2,1}.  
 \end{proof}

\subsection{The final part of the proof}
In particular, for any helper $h$ we have $d(h) \leq 2$. Thus, using our earlier observation that the degree of $h$ is at least $2$, we can conclude that $d(h)=2$. 

Observe that two helpers $h_1$ and $h_2$ cannot have $N(h_1) = N(h_2) = \{u,v\}$. The reason is, both $h_1$ and $h_2$  contribute in  $u$ seeing $v$. Therefore, even if we delete $h_2$, the set $R$ still remains an oriented relative clique contradicting the minimality of $\overrightarrow{H}$.

Now construct a graph $\overrightarrow{H}^*$ from $\overrightarrow{H}$ as follows: delete each helper and add an edge between its neighbors. Observe that $\overrightarrow{H}^*$ is planar, not necessarily triangle-free, with $V(H^*)$ being the set of good neighbors of 
$\overrightarrow{H}$. 
Also, the degree of a vertex $v$ in $\overrightarrow{H}^*$ is greater than or equal to the 
degree of $v$ in $\overrightarrow{H}$. 
As $\overrightarrow{H}^*$ is a planar graph, it must have a vertex $x$ with degree at most five. Therefore, we can say that there exists a good vertex $x$ in 
$\overrightarrow{H}$ having degree at most five. We fix the name of this vertex
$x$ for the rest of this section.

\medskip

\noindent \textit{Proof of Theorem~\ref{th main}.}  
Let $x$ be a good vertex of $\overrightarrow{H}$ 
having degree at most five whose existence follows 
from the above paragraph. 
Let $X = R \setminus (N(x) \cup \{x\})$. 
We know due to Lemma~\ref{lem 3} that 
$|R \cap N(x)| \leq 2$.

As  $|R| \geq 11$, we have $|X| \geq 8$.  
Note that each vertex of $X$ must see $x$ via one of
its neighbors.
Therefore, by pigeonhole principle at least one of 
the neighbors $x_1$  
of $x$ will have two good neighbors from $X$. Thus 
$x_1$ has three good neighbors, contradicting 
Lemma~\ref{lem 3}. \qed

\section{Properties of minimal bounds of $\mathcal{P}_4$}\label{sec min_bounds}
To begin this section, we will prove that if $\overrightarrow{T}$ is a
minimal bound of $\mathcal{P}_4$, that is, the family of triangle-free planar graphs,
then it is necessary for every vertex  of $T$ to have degree at least $10$.

\begin{theorem}\label{th P4 universal min deg}
    Let $\overrightarrow{T}$ be a minimal bound of 
    $\mathcal{P}_4$, that is, the family of triangle-free planar graphs. 
    Then $d(v) = d^+(v) + d^-(v) \geq 10$ 
    for all $v \in V(\overrightarrow{T})$. 
\end{theorem}

\begin{proof}
Let $\overrightarrow{T}$ be a minimal bound of $\mathcal{P}_4$, that is, the family of triangle-free planar graphs.  Thus, there exists a $\overrightarrow{H} \in \mathcal{P}_4$ such that for any homomorphism $f: \overrightarrow{H} \to \overrightarrow{T}$  the following are satisfied:
\begin{enumerate}[(a)]
    \item For any $v \in V(\overrightarrow{T})$, there exists some $v' \in V(\overrightarrow{H})$ such that $f(v') = v$.

    \item For any $uv \in A(\overrightarrow{T})$, there exists some $u'v' \in A(\overrightarrow{H})$ such that $f(u') = u$ and $f(v') = v$.
\end{enumerate}

\begin{figure}
  \begin{center}
      \begin{tikzpicture}[scale=0.5] 
        \begin{scope}[very thick,decoration={
          markings,
          mark=at position 0.5 with {\arrow{latex}}}
          ]   


          \node[circle,draw,fill=white!20,inner sep=3pt,minimum size=10pt] (y) at (0,0) {\footnotesize$y$};
          \node[circle,draw,fill=white!20,inner sep=3pt,minimum size=10pt] (x) at (0,8) {\footnotesize$x$};

          \node[circle,draw,fill=white!20,inner sep=1pt,minimum size=10pt] (v1) at (-9.5,4) {\footnotesize$v_1$};
          \node[rectangle,draw,fill=white!20,inner sep=2pt,minimum size=10pt] (h1) at (-7.5,4) {\footnotesize$h_1$};
          \node[circle,draw,fill=white!20,inner sep=1pt,minimum size=10pt] (v2) at (-6,4) {\footnotesize$v_2$};
          \node[circle,draw,fill=white!20,inner sep=1pt,minimum size=10pt] (v3) at (-4,4) {\footnotesize$v_3$};
          \node[rectangle,draw,fill=white!20,inner sep=2pt,minimum size=10pt] (h2) at (-2.5,4) {\footnotesize$h_2$};
          \node[circle,draw,fill=white!20,inner sep=1pt,minimum size=10pt] (v4) at (-1,4) {\footnotesize$v_4$};
          \node[circle,draw,fill=white!20,inner sep=1pt,minimum size=10pt] (v5) at (1,4) {\footnotesize$v_5$};
          \node[rectangle,draw,fill=white!20,inner sep=2pt,minimum size=10pt] (h3) at (2.5,4) {\footnotesize$h_3$};
          \node[circle,draw,fill=white!20,inner sep=1pt,minimum size=10pt] (v6) at (4,4) {\footnotesize$v_6$};
          \node[circle,draw,fill=white!20,inner sep=1pt,minimum size=10pt] (v7) at (6,4) {\footnotesize$v_7$};
          \node[rectangle,draw,fill=white!20,inner sep=2pt,minimum size=10pt] (h4) at (7.5,4) {\footnotesize$h_4$};
          \node[circle,draw,fill=white!20,inner sep=1pt,minimum size=10pt] (v8) at (9.5,4) {\footnotesize$v_8$};

          \node[] (=) at (12,4) {\Large$=$};

          \node[circle,draw,fill=white!20,inner sep=3pt,minimum size=10pt] (y1) at (14.5,0) {\footnotesize$y$};
          \node[circle,draw,fill=white!20,inner sep=3pt,minimum size=10pt] (x1) at (14.5,8) {\footnotesize$x$};


          \draw[postaction={decorate}]  (v1) -- (x);
          \draw[postaction={decorate}]  (v2) -- (x);
          \draw[postaction={decorate}]  (v3) -- (x);
          \draw[postaction={decorate}]  (v4) -- (x);

          \draw[postaction={decorate}]  (x) -- (v5);
          \draw[postaction={decorate}]  (x) -- (v6);
          \draw[postaction={decorate}]  (x) -- (v7);
          \draw[postaction={decorate}]  (x) -- (v8);

          \draw[postaction={decorate}]  (v1) -- (y);
          \draw[postaction={decorate}]  (v2) -- (y);
          \draw[postaction={decorate}]  (y) -- (v3);
          \draw[postaction={decorate}]  (y) -- (v4);

          \draw[postaction={decorate}]  (y) -- (v5);
          \draw[postaction={decorate}]  (y) -- (v6);
          \draw[postaction={decorate}]  (v7) -- (y);
          \draw[postaction={decorate}]  (v8) -- (y);

          \draw (v1) -- (h1) -- (v2);
          \draw (v3) -- (h2) -- (v4);
          \draw (v5) -- (h3) -- (v6);
          \draw (v7) -- (h4) -- (v8);

          \draw[very thick,red] (x1) -- (y1);

        \end{scope}    
        
      \end{tikzpicture}  
  
  \end{center}
  \caption{The oriented triangle-free graph $\overrightarrow{G}_0$.}
  \label{fig gadget}
\end{figure}

Next we are going to build a gadget graph recursively 
for our proof using the triangle-free planar oriented 
graph $\overrightarrow{G}_0$  depicted in Fig.~\ref{fig gadget} 
as the base graph. 
Notice that, in $\overrightarrow{G}_0$, the vertex subset $|R|= \{x, y, v_1, v_2, \ldots, v_8\}$ is 
an oriented relative clique of order $10$. 
As $\overrightarrow{G}_0$ is an
oriented triangle-free planar  graph, it must admit a homomorphism $g_0$ (say) to $\overrightarrow{T}$. 
An important observation to make here is that the 
images $g_0(x), g_0(y)$ of $x, y$, respectively, 
will also have at least two vertices (common neighbors) in the set 
$N^\alpha(g_0(x)) \cap N^\beta(g_0(y))$, for all 
$\alpha, \beta \in \{+,-\}$ inside $\overrightarrow{T}$. As this property is important for the proof, let us define it formally. 
Two distinct vertices $u,v$ of $\overrightarrow{T}$ 
are in a $X_2$-relation  if 
we have $ |N^\alpha(u) \cap N^\beta(v)| \geq 2$
for all $\alpha,\beta \in \{+,-\}$.

Let us now continue to build our gadget graph further. 
Recall that there exists an oriented triangle-free outerplanar graph $\overrightarrow{M}$ with oriented chromatic number $6$, constructed by  Dolama~\cite{hosse2005}. 
We construct a mixed graph by adding a vertex $z$ 
to the oriented graph $\overrightarrow{M}$ which is adjacent to each vertex of $\overrightarrow{M}$ via a (red) edge (depicted in Fig.~\ref{fig H}). Now replace each of the (red) edges $zw$ with the oriented graph $\overrightarrow{G}_0$. 
The precise step-wise process of this replacement is: 
take a copy of $\overrightarrow{G}_0$, identify its 
vertex $x$ (resp., $y$) with the vertex $z$ 
(resp., $w$) of the mixed graph, 
delete the (red) edge $zw$. The so-obtained oriented graph is called $\overrightarrow{G}_1$. 
Observe that, $\overrightarrow{G}_1$ is an 
oriented triangle-free planar graph, and thus it must admit a homomorphism $g_1$ (say) to $\overrightarrow{T}$.
As the oriented chromatic number of $\overrightarrow{M}$ is $6$, and as due to our construction $g_1(z)$ must be in $X_2$-relation with 
every vertex of $\overrightarrow{M}$ (sitting inside $\overrightarrow{G}_1$ as a subgraph), the image $g_1(z)$ in $\overrightarrow{T}$ must be in $X_2$-relation with at least $6$ other vertices of $\overrightarrow{T}$.

After this, take $|V(\overrightarrow{H})|$ copies of $\overrightarrow{G}_1$ and for each vertex $v$ of $\overrightarrow{H}$, take one copy of 
$\overrightarrow{G}_1$ and identify its vertex $z$ to the vertex $v$. This so-obtained oriented graph is called $\overrightarrow{G}_2$. 
Observe that, as $\overrightarrow{G}_2$ is an 
oriented triangle-free planar  graph, 
it must admit a homomorphism $g_2$ (say) to $\overrightarrow{T}$.
As each vertex of $\overrightarrow{H}$ (sitting inside $\overrightarrow{G}_2$ as a subgraph) plays the role of $z$ in $\overrightarrow{G}_1$, and as any homomorphism of $\overrightarrow{H}$ to $\overrightarrow{T}$ is onto (on the set of vertices), each vertex of $\overrightarrow{T}$ must be in 
$X_2$-relation with at least $6$ other vertices.

We are now ready to conclude the proof in the method of contradiction. Thus, let us assume that $v$ is a vertex of $\overrightarrow{T}$ satisfying 
$d(v) = d^{+}(v) + d^{-}(v) \leq 9$. 
First of all note that $d^+(v), d^-(v) \geq 4$ 
in order for $v$ to have 
$X_2$-relation with another vertex. 
Therefore, for some $\alpha \in \{+,-\}$ 
we must have $d^{\alpha}(v) = 4$. 
As $v$ has $X_2$-relation with at least $6$ vertices, 
by the pigeonhole principle, 
$v$ must have $X_2$-relation with another vertex, suppose $u$, from $N^{\alpha}(v)$. This will imply 
that we have at least two vertices in the set 
$A = N^+(u) \cap N^{\alpha}(v)$ and at least two vertices in the set $B = N^-(u) \cap N^{\alpha}(v)$. 
Notice that $A \cap B = \emptyset$ as 
$N^+(u) \cap N^-(u) = \emptyset$.

\begin{figure}
  \begin{center}
      \begin{tikzpicture}[scale=0.4] 
        \begin{scope}[very thick,decoration={
          markings,
          mark=at position 0.5 with {\arrow{latex}}}
          ]   


          \node[circle,draw,fill=white!20,inner sep=2pt,minimum size=10pt] (a) at (90:3) {};
          \node[circle,draw,fill=white!20,inner sep=2pt,minimum size=10pt] (b) at (18:3) {};
          \node[circle,draw,fill=white!20,inner sep=2pt,minimum size=10pt] (c) at (306:3) {};
          \node[circle,draw,fill=white!20,inner sep=2pt,minimum size=10pt] (d) at (234:3) {};
          \node[circle,draw,fill=white!20,inner sep=2pt,minimum size=10pt] (e) at (162:3) {};

          \node[circle,draw,fill=white!20,inner sep=2pt,minimum size=10pt] (a1) at (110:5) {};
          \node[circle,draw,fill=white!20,inner sep=2pt,minimum size=10pt] (a2) at (70:5) {};

          \node[circle,draw,fill=white!20,inner sep=2pt,minimum size=10pt] (e1) at (182:5) {};
          \node[circle,draw,fill=white!20,inner sep=2pt,minimum size=10pt] (e2) at (142:5) {};

          \node[circle,draw,fill=white!20,inner sep=2pt,minimum size=10pt] (d1) at (254:5) {};
          \node[circle,draw,fill=white!20,inner sep=2pt,minimum size=10pt] (d2) at (214:5) {};

          \node[circle,draw,fill=white!20,inner sep=2pt,minimum size=10pt] (c1) at (326:5) {};
          \node[circle,draw,fill=white!20,inner sep=2pt,minimum size=10pt] (c2) at (286:5) {};

          \node[circle,draw,fill=white!20,inner sep=2pt,minimum size=10pt] (b1) at (38:5) {};
          \node[circle,draw,fill=white!20,inner sep=2pt,minimum size=10pt] (b2) at (-2:5) {};

          \node[circle,draw,fill=white!20,inner sep=2pt,minimum size=10pt] (z1) at (90:5) {};
          \node[circle,draw,fill=white!20,inner sep=2pt,minimum size=10pt] (z2) at (80:7) {};

          \node[circle,draw,fill=white!20,inner sep=2pt,minimum size=10pt] (z) at (360:10) {\footnotesize $z$};


          \draw[-latex]  (a) -- (b);
          \draw[-latex]  (b) -- (c);
          \draw[-latex]  (c) -- (d);
          \draw[-latex]  (d) -- (e);
          \draw[-latex]  (e) -- (a);

          \draw[-latex]  (a1) -- (a);
          \draw[-latex]  (a) -- (a2);
          \draw[-latex]  (a2) -- (b1);
          \draw[-latex]  (b1) -- (b);
          \draw[-latex]  (b) -- (b2);
          \draw[-latex]  (b2) -- (c1);
          \draw[-latex]  (c1) -- (c);
          \draw[-latex]  (c) -- (c2);
          \draw[-latex]  (c2) -- (d1);
          \draw[-latex]  (d1) -- (d);
          \draw[-latex]  (d) -- (d2);
          \draw[-latex]  (d2) -- (e1);
          \draw[-latex]  (e1) -- (e);
          \draw[-latex]  (e) -- (e2);
          \draw[-latex]  (e2) -- (a1);
          \draw[-latex]  (a) -- (z1);
          \draw[-latex]  (z1) -- (z2);
          \draw[-latex]  (z2) -- (a2);

          \begin{pgfonlayer}{bg}
            \draw[thick,red!40] (z) -- (a);
            \draw[thick,red!40] (z) -- (b);
            \draw[thick,red!40] (z) -- (c);
            \draw[thick,red!40] (z) -- (d);
            \draw[thick,red!40] (z) -- (e);
            \draw[thick,red!40] (z) -- (a1);
            \draw[thick,red!40] (z) -- (a2);
            \draw[thick,red!40] (z) -- (b1);
            \draw[thick,red!40] (z) -- (b2);
            \draw[thick,red!40] (z) -- (c1);
            \draw[thick,red!40] (z) -- (c2);
            \draw[thick,red!40] (z) -- (d1);
            \draw[thick,red!40] (z) -- (d2);
            \draw[thick,red!40] (z) -- (e1);
            \draw[thick,red!40] (z) -- (e2);
            \draw[thick,red!40] (z) -- (z1);
            \draw[thick,red!40] (z) -- (z2);
          \end{pgfonlayer}

        \end{scope}    
        
      \end{tikzpicture}  
  
  \end{center}
  \caption{The oriented triangle-free outerplanar graph $\overrightarrow{M}$ having oriented chromatic number $6$ whose each vertex is adjacent to an additional vertex $z$ with a (red) edge.}
  \label{fig H}
\end{figure}

Thus, notice that, 
$$4 = d^{\alpha}(v) = |N^{\alpha}(v)| \geq |A| +|B| + |\{u\}| \geq 2 + 2 + 1 = 5,$$
a contradiction. Therefore, $d(v) \geq 10$ for all $v \in V(\overrightarrow{T})$.  
\end{proof}

\subsection{Application 1: oriented chromatic number}\label{subsec app1}
Recall that the bound $\chi_o(\mathcal{P}_4) \geq 11$
was achieved by Ochem~\cite{Ochemgirth4} using a computer check. Here, as a corollary of 
Theorem~\ref{th P4 universal min deg} we obtain a theoretical proof of the same, and slightly improve the result. 

According to Remark~\ref{remark oriented chromatic minimal family bounds}, we know that there exists a minimal bound of $\mathcal{P}_4$ on $\chi_o(\mathcal{P}_4)$ vertices. However, for the sake of completeness, and also to facilitate similar proofs in future, let us prove a lemma in a generalized set-up.

A  family $\mathcal{F}$ of (oriented) graphs is \textit{complete} if given any two (oriented) graphs belonging to it, say $G_1, G_2 \in \mathcal{F}$, there exists a 
$G \in \mathcal{F}$ such that $G_1, G_2$ are subgraphs of $G$. 

\begin{lemma}
    Let $\mathcal{F}$ be a complete family of (oriented) graphs. Then there exists a minimal bound of $\mathcal{F}$ on $\chi_o(\mathcal{F})$ vertices. 
\end{lemma}\label{lem family bound on oriented chromatic number}

\begin{proof}
    We will prove this result using the method of 
    contradiction. Therefore, let us assume that 
    there does not exist any bound of 
    $\mathcal{F}$ on $\chi_o(\mathcal{F})$ vertices. 
    Let $\mathcal{S}$ be the set of all oriented graphs on $\chi_o(\mathcal{F})$ vertices. 
    As per our assumption, no element of $\mathcal{S}$ 
    is a bound of $\mathcal{F}$. Hence, for each 
    $\overrightarrow{X} \in \mathcal{S}$, there exists 
    a 
    $\overrightarrow{H}_{\overrightarrow{X}} \in 
    \mathcal{F}$ such that 
    $\overrightarrow{H}_{\overrightarrow{X}}$ does not 
    admit a homomorphism to $\overrightarrow{X}$. 
    
    As $|\mathcal{S}|$ is finite, and as $\mathcal{F}$ is complete, there exists an oriented graph $\overrightarrow{H} \in \mathcal{F}$ such that 
    $\overrightarrow{H}_{\overrightarrow{X}}$ is a subgraph of $\overrightarrow{H}$ for all 
    $\overrightarrow{X} \in \mathcal{S}$. 
    Notice that $\overrightarrow{H}$ does not admit a homomomorphism to any oriented graph on $\chi_o(\mathcal{F})$ vertices. Therefore, 
    $\chi_o(\overrightarrow{H}) > \chi_o(\mathcal{F})$, a contradiction. 
    \end{proof}

\begin{theorem}\label{thm orientedChromatic11}
    The oriented chromatic number for the family of triangle-free oriented graphs is at least $11$. Moreover, if $\chi_o(\mathcal{P}_4) = 11$, then any bound of $\mathcal{P}_4$ on $11$ vertices must be a tournament. 
\end{theorem}

\begin{proof}
    As $\mathcal{P}_4$ is a complete family of graphs, 
    due to Lemma~\ref{lem family bound on oriented chromatic number} there exists a minimal bound $\overrightarrow{T}$ (say) of $\mathcal{P}_4$ on $\chi_o(\mathcal{P}_4)$ many vertices. By Theorem~\ref{th P4 universal min deg} we know that 
    the minimum degree of $T$ must be $10$ or more, and thus $T$ must have at least $11$ vertices. Moreover, if $T$ has exactly $11$ vertices, then $T$ must be the complete graph. 

    As $T$ is the underlying graph of a minimal bound
    of $\mathcal{P}_4$  on $\chi_o(\mathcal{P}_4)$ 
    vertices, we must have $\chi_o(\mathcal{P}_4) \geq 
    11$, and if 
    $\chi_o(\mathcal{P}_4) = 11$, then any bound of 
    $\mathcal{P}_4$ on $11$ vertices must be a 
    tournament. 
\end{proof}

\subsection{Application 2: pushable chromatic number}\label{subsec app2}
The notion of pushable homomorphism and chromatic number was introduced by Klostermeyer and MacGillivray~\cite{push}, and have been studied since 
in a number of articles~\cite{BensmailDLNS23,BensmailNS17,DasS23,Sen17}.

Let $\overrightarrow{G}$ be an oriented graph. To \textit{push} a vertex $v$ of $\overrightarrow{G}$ is to reverse the direction of all the arcs  incident to $v$. 
Moreover, $\overrightarrow{G}$ is in a 
\textit{push relation} with $\overrightarrow{G'}$ if $\overrightarrow{G'}$ can be obtained by pushing 
a set of vertices of $\overrightarrow{G}$. 
The \textit{pushable chromatic number} of $\overrightarrow{G}$, denoted by $\chi_p(\overrightarrow{G})$, is the minimum $\chi_o(\overrightarrow{G}')$ where $\overrightarrow{G}'$ varies over all oriented graphs that are in push relation with $\overrightarrow{G}$. 

For a simple graph $G$ we have 
$$\chi_p(G) = \max\{\chi_p(\overrightarrow{G}): \overrightarrow{G} \text{ is an orientation of } G\},$$  
and for a  family $\mathcal{F}$ of  graphs we have  $$\chi_p(\mathcal{F})=\max\{\chi_p(G) : G \in \mathcal{F}\}.$$
Even though the bound $\chi_p(\mathcal{P}_4) \geq 6$ is 
not proved explicitly anywhere, it  follows directly 
from $\chi_o(\mathcal{P}_4) \geq 11$, the result 
obtained by Ochem~\cite{Ochemgirth4} via a computer 
check. As we have reproved the same theoretically, the 
bound $\chi_p(\mathcal{P}_4) \geq 6$ 
also have a theoretical proof now.

   \begin{theorem}
    The pushable chromatic number for the family of 
    triangle-free oriented graphs is at least $6$. 
\end{theorem} 

\begin{proof}
    We know $\chi_p(\mathcal{F}) \geq \lceil \frac{\chi_o(\mathcal{F})}{2} \rceil$ due to Klostermeyer and MacGillivray~\cite{push}. Therefore, 
    $$\chi_p(\mathcal{P}_4) \geq \lceil \frac{\chi_o(\mathcal{P}_4)}{2} \rceil \geq \lceil \frac{11}{2} \rceil = 6.$$
    Thus we are done.
\end{proof}

\subsection{Application 3: $2$-dipath and oriented $L(p,q)$ span}\label{subsec app3}
The Channel Assignment Problem (CAP) in wireless network is an important domain of research. There are several graph models proposed for studying CAP using graph theory. 
One of the popular such graph model of the problem is given by the $L(p,q)$-labeling problem~\cite{Calamoneri11}. While the $L(p,q)$-labeling problem is defined on simple graphs, as its less popular oriented analogue, two models exist in literature and has been studied. Those two models are called the $2$-dipath 
$L(p,q)$-labeling and the oriented $L(p,q)$-labeling~\cite{CalamoneriS13,GonccalvesRS07}.

A \textit{$2$-dipath $k$-$L(p,q)$-labeling}~\cite{CalamoneriS13} of an oriented graph $\overrightarrow{G}$ is a function 
$l : V(\overrightarrow{G}) \rightarrow \{0,1,\ldots,k\}$ 
satisfying
\begin{enumerate}[(i)]
    \item If $u,v$ are adjacent, then $|l(u) - l(v)| \geq p$.

    \item If $u,v$ are endpoints of a directed $2$-path, then 
    $|l(u) - l(v)| \geq q$.
\end{enumerate}
The \textit{$2$-dipath $L(p,q)$-labeling span} of $\overrightarrow{G}$, denoted by $\overrightarrow{\lambda}_{p,q}(\overrightarrow{G})$, 
is the minimum $k$ such that $\overrightarrow{G}$ admits a $2$-dipath $k$-$L(p,q)$-labeling.
For the family of oriented triangle-free planar graphs, we can prove an improved lower bound of the $2$-dipath $L(p,1)$-labeling span.

\begin{theorem}
    There exists an oriented triangle-free planar graph $\overrightarrow{H}$ having $$\overrightarrow{\lambda}_{p,1}(\overrightarrow{H}) \geq p+8$$ for all $p \geq 1$.  
\end{theorem}

\begin{proof}
    Let us consider the oriented graph 
    $\overrightarrow{G}_0$ 
    depicted in Fig.~\ref{fig gadget} and take $\overrightarrow{H} = \overrightarrow{G}_0$. 
    Let $l$  be any $2$-dipath $k$-$L(p,1)$-labeling of $\overrightarrow{H}$, for some non-negative integer $k$.  It is enough to show that $k \geq p+8$. 

   Notice that, the vertices from the set 
   $$R = \{x,y,v_1, v_2, \cdots, v_8\}$$ 
   must receive distinct images under $l$ as they are pairwise either adjacent or connected by a directed $2$-path. 
   Moreover, we must have $|l(z)-l(w)| \geq p$ for any $z \in \{x,y\}$ and for any $w \in \{v_1, v_2, \cdots, v_8\}$ as they are adjacent. 

    Thus, the sets 
    $$A_x = \{j : j \in \{0,1,\cdots, k\}  \text{ and } |l(x)-j| < p\}$$
    and 
    $$A_y = \{j : j \in \{0,1,\cdots, k\}  \text{ and } |l(y)-j| < p\}$$
    are sets of labels which cannot be used on any vertices of $R \setminus \{x,y\}$. As 
    $|\{l(x),l(y)\} \cup (A_x \cap A_y)| \geq p+1$ (the minimum is attained  when $l(x)=0$ and $l(y)=1$ and $l(v_i) = p+i$, for example), we must have 
    $$|l(R \setminus \{x,y\})| + |\{l(x),l(y)\} \cup (A_x \cap A_y)| \geq 8 + (p+1).$$
    However, as we also use $0$ among the labels, we have 
    $k \geq p+8$. 
\end{proof}

    The notion of oriented $L(p,q)$-labeling is a result of combining the concepts of $2$-dipath $L(p,q)$-labeling and oriented coloring~\cite{GonccalvesRS07}. However, in~\cite{0001NS18} the definition of 
oriented $L(p,1)$-labeling has an equivalent formulation 
using graph homomorphisms. As our result only concerns  oriented $L(p,1)$-labeling and its corresponding span, we are going to provide the homomorphism version of the definition. 

The graph $L_{p,k}$ is a graph on the set of vertices 
$\{0, 1, \ldots, k\}$, and its two vertices $i, j$ are adjacent if $|i-j|  \geq p$. 
An \textit{oriented $k$-$L(p,1)$-labeling}~\cite{0001NS18} of an oriented graph $\overrightarrow{G}$ is a homomorphism 
$h : \overrightarrow{G} \to  \overrightarrow{L}_{p,k}$, where $\overrightarrow{L}_{p,k}$ is any orientation of 
$L_{p,k}$. One can note that, apart from being a homomorphism, $h$ is also a $2$-dipath 
$k$-$L(p,1)$-labeling as well.

The \textit{oriented $L(p,1)$-labeling span} of $\overrightarrow{G}$, denoted by 
$\lambda^o_{p,q}(\overrightarrow{G})$, 
is the minimum $k$ such that $\overrightarrow{G}$ admits an oriented $k$-$L(p,1)$-labeling.
For the family of oriented triangle-free planar graphs, we can prove an improved lower bound of the oriented $L(p,1)$-labeling span.

\begin{theorem}
    There exists an oriented triangle-free planar graph $\overrightarrow{H}$ having 
    $$\lambda^o_{p,1}(\overrightarrow{H}) \geq 2p+8$$
    for all $p \geq 1$. 
\end{theorem}

\begin{proof}
We know that any minimal bound of $\mathcal{P}_4$ has minimum degree at least $10$ due to Theorem~\ref{th P4 universal min deg}. As any bound contains a minimal bound as a subgraph, 
any bound of $\mathcal{P}_4$ must have 
minimum degree $10$ or more as well. 

Let $\overrightarrow{H}$ be an oriented triangle-free planar graph for which the value of 
$\lambda^o_{p,1}(\overrightarrow{H})$ 
attains its maximum possible value. Say that value is $k$. 
We have to show that $k \geq 2p+8$.

Note that, in this case, the oriented graph obtained from taking the disjoint union of all the orientations of the graph $L_{p,k}$ is a bound of $\mathcal{P}_4$. However, this oriented graph (the one obtained by taking the disjoint union) has minimum degree equal to $$(k-2(p-1)) = k-2p+2.$$ However, as the minimum degree must be at least $10$ due to Theorem~\ref{th P4 universal min deg}, we have
$$k-2p+2 \geq 10 \implies k \geq 2p+8.$$
Hence we are done. 
\end{proof}

\section{Concluding remarks}\label{sec conclusion}
\begin{enumerate}
    \item If we look at Table~\ref{table results}, which summarizes the best-known bounds of the oriented relative and absolute clique numbers of planar graphs having girth at least $g \geq 3$, we observe that except for $\omega_{ro}(\mathcal{P}_3)$, that is, the oriented relative clique number of the family of planar graphs,  exact values are known for all cases. Thus, one natural direction of research is to figure out the exact value of $\omega_{ro}(\mathcal{P}_3)$. 
    This will also answer Problem~\ref{problem planar} due to Sopena~\cite{sopena-updated-survey}.

 \item One interesting observation is that the three parameters, namely, the oriented chromatic number and the oriented clique numbers take three distinct values for the family of triangle-free planar graphs. To be precise, we know that $\omega_{ao}(\mathcal{P}_4)=6$~\cite{NSS}, 
 $\omega_{ro}(\mathcal{P}_4) = 10$, and 
 $\chi_{o}(\mathcal{P}_4) \geq 11$. Thus, one of the most natural questions to ponder in this direction is to figure out the exact value of $\chi_{o}(\mathcal{P}_4)$. 
 However, it is known to be a difficult problem. Therefore, as an intermediate step towards its resolution, can we improve its existing lower and upper bounds of $11$ and $40$? 
 In particular, can we extend the ideas exercised in the proof of Theorem~\ref{th P4 universal min deg} to characterize all possible minimal bounds of $\mathcal{P}_4$ on $11$ vertices? Or possibly, improve the lower bound to $12$ by showing that there does not exist any such minimal bond of $\mathcal{P}_4$ on $11$ vertices? Furthermore, improving the upper bound of $40$ is yet another challenging question. 

 \item Finding the tight upper bound for the pushable chromatic number, $2$-dipath $L(p,1)$-span, and the oriented $L(p,1)$-span of an oriented triangle-free planar graph are natural open questions. In fact, improving the lower and upper bounds are challenging questions as well. 
\end{enumerate}

\bigskip
\noindent\textbf{Acknowledgements:}
{  This work is partially supported by the IFCAM project “Applications of graph homomorphisms” (MA/IFCAM/18/39), “NBHM/RP-8(2020)/Fresh”, SERB-MATRICS ``Oriented chromatic and clique number of planar graphs'' (MTR/2021/000858).}

\bibliographystyle{abbrv}
\bibliography{LNSS}

\end{document}